\newcommand{\cosimcid}{\mathcal{C}_{\text{id}}^{\text{sim}}}
\newcommand{\cocid}{\mathcal{C}_{\text{id}}}
\newcommand{\simcid}{C_{\text{id}}^{\text{sim}}}
\newcommand{\cid}{C_{\text{id}}}
\newcommand{\alx}{\mathcal{X}}
\newcommand{\aly}{\mathcal{Y}}
\newcommand{\hi}{\mathcal{H}}
\newcommand{\st}{\mathcal{S}}
\newcommand{\co}{\mathcal{C}}
\newcommand{\ciel}[1]{\lceil #1 \rceil}
\newcommand{\ou}[1]{\overline{\mathcal{#1}}}
\newcommand{\nd}{\text{ and }}
\newcounter{thm}
\numberwithin{thm}{section}
\newtheorem{theorem}[thm]{Theorem}
\theoremstyle{theorem}
\newtheorem{remark}[thm]{Remark}
\newtheorem{definition}[thm]{Definition}
\newtheorem{corollary}[thm]{Corollary}
\newtheorem{lemma}[thm]{Lemma}
\newtheorem{notation}[thm]{Notation}
\begin{document}
\title{The Simultaneous Identification Capacity of the Classical - Quantum Multiple Access Channel} 
	
\author{Holger Boche, Stephen Diadamo\\
	\text{\small{[boche, stephen.diadamo]@tum.de}}
	\\\\
	Lehrstuhl f\"ur Theoretische Informationstechnik, Technische Universit\"at M\"unchen\\
	80290 Munchen, Germany 
}
\maketitle
\thispagestyle{empty}
	
\begin{abstract}
	Here we discuss message identification, a problem formalized by Rudolf Ahlswede and Gunter Dueck, over a classical-quantum multiple access channel with two classical senders and one quantum receiver. We show  that the simultaneous identification capacity, a capacity defined by Peter L\"ober, of this  multiple access channel is equal to its message transmission capacity region. 
\end{abstract}
	
\section{Introduction}
	
In message transmission, the receiver of the message is interested to know what exactly the message received reads. This differs from message identification such that the receiver of the message is only looking to answer the question, ``Is this received message the one I am interested in?" The sender is free to choose the message they wish to send, and it is up to the receiver to determine this single bit of information.
	
Over classical channels it was shown by Rudolf Ahlswede, Gunter Dueck, et al that there exists identification codes that are doubly exponential in size with respect to the blocklength. Further, for classical-quantum channels, the result was extended by Peter L\"ober to show that there exists simultaneous identification codes which also are doubly exponential in size with respect to blocklength. L\"ober includes a restriction on his codes such that they must be simultaneous, which we provide the definition for in a latter section, but it was shown by Andreas Winter and Rudolf Ahlswede that this restriction can be dropped and so the single logarithmic scaled transmission capacity of a discrete, memoryless classical-quantum channel is equal to its doubly logarithmic scaled identification capacity.

Here we will firstly consider the classical-quantum multiple access channel with two classical senders and one quantum receiver (CCQ). We begin by introducing the framework for working with classical-quantum channels and review definitions for the message transmission setting.	
	
Next we provide theory for the transmission capacity region of the CCQ channel using codes under a maximal error error criterion. In the case of the single sender-single receiver channel, there is no difference in capacity between using codes under a maximal error error criterion or an average error criterion, but it is known in the classical case that these two capacity regions are not equal \cite{capacity_regions_different}. Because use of maximal error codes are made in the proof of achievability, we need that the capacity region is not empty when the average error capacity region is non-empty, and this is therefore proved. 
	
In the following section, we define the models for identification over a CCQ channel. We provide definitions for a restricted version of the problem, namely, simultaneous identification, a concept by Peter L\"ober \cite{loeber_id_cap}. Because the channel outputs a quantum state, multiple measurements for identification on the state cannot be performed. The simultaneous identification model overcomes this by performing all identification measurements at one time with a single measurement. With this, we can define a simultaneous identification capacity region for the CCQ channel and further determine a quantity for this region.
	
The achievability proof follows the technique of Ahlswede and Dueck to prove the achievability of the classical single sender-single receiver channel, that is, adding a small amount of randomness to a transmission code to transform it into a random identification (ID) code. We show that there exists realizations of the random ID code that achieve the desired capacity.
	
For the converse, we follow the technique of Peter L\"ober  \cite{loeber_id_cap} and Yosef Steinberg \cite{steinberg} which uses resolvability theory, formalized by Te Sun Han and Sergio Verd\'u \cite{resolve_theory}. With this converse, we conclude that the simultaneous identification capacity of the CCQ multiple access channel is indeed equal to the transmission capacity region for memoryless CCQ channels.
	
\section{Review of Message Transmission}
	
In this section, we review the definitions for classical-quantum channels and their codes. By $\alx$ or $\aly$ we refer to finite alphabets. The $k$-fold product set of an alphabet $\alx$ is denoted as $\alx^k \coloneqq \alx \times ... \times \alx$. The space of quantum states with respect to a particular Hilbert space $\hi$ is denoted as $\mathcal{S}(\hi)$ and the set of linear operators on $\hi$ is denoted as $\mathcal{L}(\hi)$. Here we only consider finite dimensional, complex Hilbert spaces. The set of probability distributions on another set $\alx$ is denoted as $\mathcal{P}(\alx)$. Quantum channels, usually denoted as $W$, in this report are always completely positive and trace preserving maps. 
	
\begin{definition}[CQ channel]
	A classical-quantum (CQ) channel $\mathbf{W}$ is a family of quantum channels 
	\begin{align*}
		\{ W^k : \alx^k \rightarrow \mathcal{S}(\mathcal{H}^{\otimes k})\}_{k\in\mathbb{N}}. 
	\end{align*}
	We say that $\mathbf{W}$ is a discrete memoryless CQ (DM-CQ) channel generated by $W$ if for all $x^k = (x_1,...,x_k) \in \alx^k$, for all $k$, 
	\begin{align*}
		W^{k}(x^k) = \bigotimes_{i=1}^k W(x_i). 
	\end{align*}
\end{definition}

\begin{definition}[CCQ channel]
	A CCQ channel $\mathbf{W}$ is a family of quantum channels  
	\begin{align*}
		\{ W^k : \alx^k \times \aly^k \rightarrow \mathcal{S}(\mathcal{H}^{\otimes k})\}_{k\in\mathbb{N}}. 
	\end{align*}
	We say that $\mathbf{W}$ is a DM-CCQ channel generated by $W$ if for all $x^k = (x_1,...,x_k) \in \alx^k$ and all $y^k = (y_1,...,y_k)\in\aly^k$, for all $k$, 
	\begin{align*}
		W^{k}(x^k, y^k) = \bigotimes_{i=1}^k W(x_i,y_i). 
	\end{align*}
\end{definition}

\begin{notation}
	If $\mathbf{W}$ is a discrete memoryless channel generated by a channel $W$, we refer to it simply as $W$, that is, without the bold face. We make this distinction since some lemmas or theorems will hold in general, but some are proved with the assumption of a discrete memoryless property. 
\end{notation}

\begin{definition}[Channel state]\label{gamma}
	For CCQ channel $\mathbf{W}$, probability distributions $p_1 \in \mathcal{P}(\alx^k)$ and $p_2 \in \mathcal{P}(\aly^k)$, and Hilbert spaces $\hi_A$ and $\hi_B$ with respective orthonormal bases $\left \{\ket{x^k}\right \}_{x^k\in\alx^k}$ and $\left \{\ket{y^k}\right \}_{y^k\in\aly^k}$, the channel state is defined as 
	\begin{align*}
		\begin{aligned}
		  & \gamma^k_2(p_1,p_2) \coloneqq \\ &\sum_{\substack{x^k \in \alx^k \\ y^k \in \aly^n}} p_1(x^k) p_2(y^k) \ketbra{x^k} \otimes \ketbra{y^k} \otimes W^k(x^k, y^k).
		\end{aligned}
	\end{align*}
	When $k=1$, that is, $p_1\in\mathcal{P}(\alx)$ and $p_2\in\mathcal{P}(\aly)$, we write for notational simplicity $\gamma_2(p_1,p_2) \coloneqq \gamma^1_2(p_1,p_2) $. We also define CQ channel states defined similarly, where we denote $\gamma_1$ and $\gamma_1^k$ for single sender channel states in a similar way, with a single distribution parameter over one orthonormal basis. 
\end{definition}

\begin{definition}[$(k,M,N)$-code]
	For a CCQ channel $\mathbf{W}$, a $(k,M,N)$-code for classical message transmission is the family $\co\coloneqq (x_{m}, y_{n}, D_{mn})_{m=1,n=1}^{M,N}$ where $x_{1},...,x_{M} \in \alx^k$, $y_{1},...,y_{N} \in \aly^k$, and $\{D_{mn}\}_{m=1,n=1}^{M,N} \subset \mathcal{L}(\hi^{\otimes k})$ forms a POVM. \\\\
	For a $(k,M,N)$-code $\co$, the average error of transmission is defined as 
	\begin{align*}
		\overline{e}(\co, W^k) \coloneqq 1 - \frac{1}{MN}\sum_{\substack{m=1 \\n=1}}^{M,N} \tr(D_{mn} W^k(x_{m},y_{n})),
	\end{align*}	
	and the maximal,
	\begin{align*}
		e(\co, W^k) \coloneqq \max_{m,n}\hspace{2mm} 1 - \tr(D_{mn} W^k(x_{m},y_{n})). 
	\end{align*}
\end{definition}

\begin{definition}[Achievable rate pair]
	For a CCQ channel $\mathbf{W}$, we say $(R_1, R_2) \in \mathbb{R}^2$, $R_1, R_2 \geq 0$, is an achievable rate pair if for all $\epsilon, \delta > 0$, there exists a $k_0$ such that for all $k\geq k_0$ there exists a $(k,M,N)$-code $\co$ such that, 
	\begin{align}\label{conds_ach_pair}
		\frac{1}{k}\log M \geq R_1 - \delta, \hspace{4mm} \frac{1}{k} \log N \geq R_2 - \delta, \hspace{4mm} \overline{e}(\co, W^k) \leq \epsilon. 
	\end{align}
	The capacity region for $\mathbf{W}$ is defined as
	\begin{align*}
		C(\mathbf{W}) \coloneqq \{(R_1, R_2) \mid (R_1, R_2) \text{ is an achievable rate pair} \}. 
	\end{align*}
	We say a $(k,M,N)$-code $\co$ achieves the rate pair $(R_1, R_2)$ if (\ref{conds_ach_pair}) holds for all $\epsilon, \delta > 0$.
\end{definition}

\begin{lemma}\label{single_sender}
	For a CCQ channel $\mathbf{W}$, a $(k, M, N)$-code $\co$ with $\overline{e}(\co,W^k) \leq \epsilon$ can be used as a $(k,M)$ or $(k,N)$-code $\co_M$ or $\co_N$ with $\overline{e}(\co_M, W_M^k) =  \overline{e}(\co_N, W_N^k) \leq \epsilon$, where $W_M^k$ and $W_N^k$ are the CQ channels generated by taking the average output over one sender.
\end{lemma}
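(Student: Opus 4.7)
The plan is to reduce the two-sender code to a single-sender code by coarse-graining the decoding POVM over the unwanted sender's messages, while treating the unwanted sender's codewords as a uniform distribution that defines the marginal channel. First I would set $W_M^k(x)\coloneqq \tfrac{1}{N}\sum_{n=1}^N W^k(x,y_n)$ for $x\in\alx^k$, and analogously $W_N^k(y)\coloneqq \tfrac{1}{M}\sum_{m=1}^M W^k(x_m,y)$; both are valid CQ channels into $\st(\hik)$. For the $(k,M)$-code $\co_M$, I would keep the first-sender codewords $x_1,\dots,x_M$ and take the decoding operators $E_m\coloneqq \sum_{n=1}^N D_{mn}$. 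Since $\{D_{mn}\}_{m,n}$ is a POVM, each $E_m\geq 0$ and $\sum_{m=1}^M E_m = \sum_{m,n}D_{mn} = I$, so $\{E_m\}_{m=1}^M$ is a POVM on $\hik$.

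Next I would bound the average error. By linearity of trace,
\begin{align*}
\tr\bigl(E_m\, W_M^k(x_m)\bigr) = \tfrac{1}{N}\sum_{n,n'=1}^N \tr\bigl(D_{mn'}\, W^k(x_m,y_n)\bigr),
\end{align*}
and the diagonal ($n=n'$) portion is exactly $\tfrac{1}{N}\sum_{n}\tr(D_{mn}W^k(x_m,y_n))$, while every off-diagonal term is non-negative since both $D_{mn'}$ and $W^k(x_m,y_n)$ are positive semidefinite. Averaging over $m$ and subtracting from $1$ gives
\begin{align*}
\overline{e}(\co_M, W_M^k) &\leq 1 - \tfrac{1}{MN}\sum_{m,n}\tr\bigl(D_{mn}\,W^k(x_m,y_n)\bigr) \\
&= \overline{e}(\co,W^k)\leq \epsilon.
\end{align*}
The symmetric construction with $F_n\coloneqq \sum_{m=1}^M D_{mn}$ and codewords $y_1,\dots,y_N$ yields the $(k,N)$-code $\co_N$ satisfying $\overline{e}(\co_N, W_N^k)\leq \epsilon$ by the same argument.

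There is no real obstacle here: the lemma is essentially a positivity calculation, confirming the intuition that throwing away the decoder's information about the second sender's message and averaging over that sender's input uniformly can only increase (not decrease) the success probability. The only checks worth naming are that $\{E_m\}$ (respectively $\{F_n\}$) is a bona fide POVM on the full $k$-fold Hilbert space, and that the cross terms in the expansion of $\tr(E_m W_M^k(x_m))$ have non-negative trace; both are immediate from the positivity of the original POVM elements and of the output states.
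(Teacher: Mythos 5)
Your proof is correct and follows essentially the same route as the paper: define the marginal channel by averaging uniformly over the other sender's codewords, coarse-grain the POVM by summing $D_{mn}$ over $n$, and drop the non-negative cross terms to bound the average error by $\overline{e}(\co,W^k)\leq\epsilon$. Your explicit verification that $\{E_m\}$ is a POVM is a small but welcome addition the paper leaves implicit.
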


\begin{proof}
	Let $\co\coloneqq(x_m,y_n,D_{mn})_{m=1,n=1}^{M,N}$ be a $(k, M, N)$-code with  $\overline{e}(\co,W^k) \leq \epsilon$. Consider the channel
	\begin{align*}
		W_M^k & : \alx^k \rightarrow \st(\hi^{\otimes k})            \\
		      & : x^k \mapsto  \frac{1}{N}\sum_{n=1}^N W^k(x^k, y_n) 
	\end{align*} 
	Define $s_m(D_{mn}) \coloneqq \sum_{n=1}^{N} D_{mn}$ and the code $\co_M \coloneqq (x_m,  s_m(D_{mn}))_{m=1}^M$.
	\newpage
	\begin{align*}
		\overline{e} & (W_M^k,\co_M) \\ &= 1 - \frac{1}{M} \sum_{m=1}^M \tr(s_m(D_{mn})W_M^k(x_m)) \\
		&= 1 - \frac{1}{MN} \sum_{n=1}^N\sum_{m=1}^M \tr(\left (\sum_{n'=1}^{N} D_{mn'}\right ) W^k(x_m, y_n)) \\
		&\leq 1 - \frac{1}{MN} \sum_{n=1}^N\sum_{m=1}^M \tr(D_{mn} W^k(x_m, y_n))  \\
		&\leq \epsilon 
	\end{align*}
	
	Thus $\co_M$ is a $(k,M)$-code that has average error bounded by $\epsilon$ over the channel $W_M^k$. Analogous arguments can be made to construct a $(k,N)$-code with bounded average error over a channel 
	\begin{align*}
		W_N^k & : \aly^k \rightarrow \st(\hi^{\otimes k})               \\
		      & : y^k \mapsto  \frac{1}{M}\sum_{n=1}^N W^k(x_m^k, y^k). 
	\end{align*} 
\end{proof}

\section{Maximal Message Transmission Error Capacity Region}
In this section we give a brief analysis of the capacity region defined using the maximal message transmission error figure of merit, rather than the average message transmission error. We refer to this type of merit as the ``maximal-error".

\begin{definition}[Maximal-error achievable rate pair]
	For a CCQ channel $\mathbf{W}$, we say $(R_1, R_2) \in \mathbb{R}^2$, $R_1, R_2 \geq 0$, is a max-error achievable rate pair if for all $\epsilon, \delta > 0$, there exists a $k_0$ such that for all $k\geq k_0$ there exists a $(k,M,N)$-code $\co$ such that,
	\begin{align}\label{conds_max}
		\frac{1}{k}\log M \geq R_1 - \delta, \hspace{4mm} \frac{1}{k} \log N \geq R_2 - \delta, \hspace{4mm} e(\co, W^k) \leq \epsilon. 
	\end{align}
	The max-error capacity region for $\mathbf{W}$ is defined as
	\begin{align*}
		\begin{aligned}
		  & C_{\max}(\mathbf{W})\coloneqq \\ &\hspace{2mm}\{(R_1, R_2) \mid (R_1, R_2) \text{ is a max-error achievable rate pair} \}. 
		\end{aligned}
	\end{align*}
	We say a $(k,M,N)$-code $\co$ ``max-error achieves" the rate pair $(R_1, R_2)$ if (\ref{conds_max}) holds for all $\epsilon, \delta > 0$.
\end{definition}

\begin{lemma}[Maximal error capacity region is closed]
	For CCQ channel $\mathbf{W}$, $C_{\text{max}}(\mathbf{W})$ is closed.
\end{lemma}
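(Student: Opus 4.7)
To prove $C_{\max}(\mathbf{W})$ is closed, the plan is to show that it contains all its limit points by a direct diagonal argument on the definition of max-error achievability. Specifically, I would take a convergent sequence $(R_1^{(i)}, R_2^{(i)})_{i \in \mathbb{N}} \subset C_{\max}(\mathbf{W})$ with limit $(R_1^*, R_2^*)$ and verify that $(R_1^*, R_2^*)$ itself satisfies the conditions in (\ref{conds_max}).

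The key idea is a halving of the slack parameter $\delta$. First I would fix arbitrary $\epsilon, \delta > 0$. By convergence of the sequence, I can choose an index $i_0$ large enough that $|R_j^{(i_0)} - R_j^*| < \delta / 2$ for both $j = 1, 2$; in particular $R_j^{(i_0)} > R_j^* - \delta/2$. Next, because $(R_1^{(i_0)}, R_2^{(i_0)}) \in C_{\max}(\mathbf{W})$, applying its defining property with error tolerance $\epsilon$ and rate slack $\delta/2$ yields some $k_0$ such that for every $k \geq k_0$ there is a $(k, M, N)$-code $\co$ with $\frac{1}{k}\log M \geq R_1^{(i_0)} - \delta/2$, $\frac{1}{k}\log N \geq R_2^{(i_0)} - \delta/2$, and $e(\co, W^k) \leq \epsilon$. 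Chaining these inequalities gives $\frac{1}{k}\log M \geq R_1^* - \delta$ and $\frac{1}{k}\log N \geq R_2^* - \delta$, which is exactly the condition needed to conclude $(R_1^*, R_2^*) \in C_{\max}(\mathbf{W})$.

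Since this argument works for any convergent sequence in $C_{\max}(\mathbf{W})$, closure follows. I expect no real obstacle here: the whole proof is a straightforward $\delta/2$ triangle-inequality exercise on the quantifier structure of the definition, and it does not touch anything quantum-specific (indeed, the very same argument would establish closure of $C(\mathbf{W})$ under the average-error criterion, or of any capacity region defined by such a "for all $\epsilon, \delta > 0$, eventually\ldots" schema). No appeal to the memoryless structure of $W$ or to properties of POVMs is needed — only the literal reading of Definition of max-error achievable rate pair.
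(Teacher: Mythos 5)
Your proof is correct and takes essentially the same route as the paper's: both select an index $i_0$ whose rate pair lies within $\delta/2$ of the limit and then invoke max-error achievability of $(R_1^{(i_0)}, R_2^{(i_0)})$ with slack $\delta/2$, chaining the inequalities to conclude the limit point satisfies the definition. No further comment is needed.
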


\begin{proof}
	Let $(R_{1i}, R_{2i})_{i\in \mathbb{N}}$ be a sequence in $C_{\text{max}}(\mathbf{W})$ that converge to some $(R_1, R_2)$. By the convergence of $(R_{1i}, R_{2i})_{i\in \mathbb{N}}$, there exists an $i_0$ large enough such that for $\delta > 0$,
	\begin{align*}
		R_{1{i_0}} & \geq R_1 - \delta/2 \\  R_{2{i_0}} &\geq R_2 - \delta/2.
	\end{align*}
	Since $(R_{1{i_0}}, R_{2{i_0}})$ is a max-error achievable rate pair, there exists a $k_0$ such that for all $k\geq k_0$, there is a $(k,M,N)$-code $\co$ such that for all $\epsilon >0$, $e(\co,W^k)\leq \epsilon$ and 
	\begin{align*}
		\frac{1}{k}\log M & \geq R_{1{i_0}} - \delta/2 \geq R_1 - \delta \\ \frac{1}{k}\log N &\geq R_{2{i_0}} - \delta/2 \geq R_2 - \delta.
	\end{align*}
	Since $\delta$ was chosen arbitrarily, $\co$ also achieves $(R_1, R_2)$ under maximum error criterion, hence $(R_1, R_2)\in C_{\text{max}}(\mathbf{W})$ and therefore $C_{\max}(\mathbf{W})$ is closed.
\end{proof}

\begin{lemma}[Maximal error capacity region is convex]
	For CCQ channel $\mathbf{W}$, $C_{\text{max}}(\mathbf{W})$ is convex.
\end{lemma}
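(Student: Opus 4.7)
The plan is to prove convexity by a time-sharing (code concatenation) argument: given two max-error achievable pairs $(R_1, R_2), (R_1', R_2') \in C_{\max}(\mathbf{W})$ and $\lambda \in [0,1]$, I will construct, for large blocklengths $k$, a single $(k, M_1 M_2, N_1 N_2)$-code whose rates are close to $\lambda(R_1, R_2) + (1-\lambda)(R_1', R_2')$ and whose maximal error is small. Because the preceding lemma shows $C_{\max}(\mathbf{W})$ is closed, it suffices to handle rational $\lambda = p/q$ and then pass to the limit for arbitrary $\lambda \in [0,1]$.

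Fix $\epsilon, \delta > 0$ and write $k = k_1 + k_2$ with $k_1 \coloneqq \lfloor \lambda k \rfloor$. For $k$ sufficiently large, max-error achievability supplies a $(k_1, M_1, N_1)$-code $\co_1 = (x_m, y_n, D_{mn})$ and a $(k_2, M_2, N_2)$-code $\co_2 = (x_{m'}', y_{n'}', D_{m'n'}')$ with $e(\co_1, W^{k_1}), e(\co_2, W^{k_2}) \leq \epsilon/2$ and individual rates within $\delta/2$ of their targets. I concatenate them into the code $\co$ with codewords $(x_m, x_{m'}') \in \alx^k$ and $(y_n, y_{n'}') \in \aly^k$ and POVM elements $\{D_{mn} \otimes D_{m'n'}'\}$. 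Invoking the tensor decomposition $W^k((x_m, x_{m'}'), (y_n, y_{n'}')) = W^{k_1}(x_m, y_n) \otimes W^{k_2}(x_{m'}', y_{n'}')$, each success probability factors as a product of the two individual success probabilities, giving $e(\co, W^k) \leq 1 - (1-\epsilon/2)^2 \leq \epsilon$. The rate computation
\begin{align*}
	\tfrac{1}{k}\log(M_1 M_2) = \tfrac{k_1}{k}\cdot\tfrac{\log M_1}{k_1} + \tfrac{k_2}{k}\cdot\tfrac{\log M_2}{k_2}
\end{align*}
yields $\tfrac{1}{k}\log(M_1 M_2) \geq \lambda R_1 + (1-\lambda)R_1' - \delta$ for $k$ large (and analogously for $N_1 N_2$), so the convex combination is max-error achievable. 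Closure of $C_{\max}(\mathbf{W})$ then extends the conclusion from rational to all $\lambda \in [0,1]$.

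The main obstacle is the tensor factorization of $W^{k_1 + k_2}$ on concatenated input pairs into $W^{k_1} \otimes W^{k_2}$, which is what converts the product of success probabilities into the max-error bound on $\co$. This property is automatic for the DM-CCQ channels that are the paper's focus (by the definition of the generator $W$), but without such a structural assumption on the family $\{W^k\}_{k\in\mathbb{N}}$ the concatenation does not yield a well-defined single-block code, and a different approach would be required. Apart from this, the argument is routine time-sharing combined with the already-established closure property.
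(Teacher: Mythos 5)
Your proposal is correct and is essentially the paper's own argument: time-sharing by concatenating two max-error codes over $k_1+k_2$ channel uses with product decoders $D_{mn}\otimes D'_{m'n'}$, factoring the success probability, and balancing the blocklength ratio against $\lambda$ (your error bookkeeping with $\epsilon/2$ and $1-(1-\epsilon/2)^2\leq\epsilon$ is in fact tidier than the paper's $\epsilon^{1/2}$ version). Your caveat about needing $W^{k_1+k_2}$ to factor as $W^{k_1}\otimes W^{k_2}$ is well taken --- the paper's proof uses this tensorization implicitly even though the lemma is stated for a general CCQ family --- so it is a shared limitation rather than a defect of your argument.
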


\begin{proof}
	Let $(R_{1}, R_{2})$ and $(R_{1}', R_{2}')$ be two pairs in $C_{\max}(\mathbf{W})$ and $\epsilon, \delta >0$, then there exists a $k_0$ such that for all $k \geq k_0$,  $(k, M, N)$-code $\co \coloneqq (x_m, y_n, D_{mn})_{m=1, n=1}^{M,N}$ satisfies $e(\co, W^{k}) \leq  \epsilon^{1/2}$, $M\geq 2^{k(R_1 - \delta_1)}$, $N\geq 2^{k(R_2 - \delta_1)}$, and a $k_0'$ such that for all $k'\geq k_0'$, $(k', M', N')$-code  $\co' \coloneqq (x'_m, y'_n, D'_{mn})_{m=1, n=1}^{M',N'}$ satisfies $e(\co', W^{k'}) \leq \epsilon^{1/2}$, $M'\geq 2^{k'(R'_1 - \delta_2)}$, $N'\geq 2^{k'(R'_2 - \delta_2)}$. We make the choice $\delta_1$ and $\delta_2$ such that the rate calculations below are satisfied. We show that for all $\alpha \in [0,1]$ that $(\alpha R_1 + (1-\alpha)R_1', \alpha R_2 + (1-\alpha)R_2')$ is a max-error achievable rate pair. The strategy is to construct a new code that sends first a message $(x_m, y_n)$ from code $\co$ with $k$ uses of the channel and then a message $(x_{m'}', y_{n'}')$ from $\co'$ with $k'$ uses of the channel. The maximum error of such a code is bounded by, 
	\begin{align*}
		\begin{aligned} & \tr( D_{mn} \otimes D'_{m'n'} W^k(x_m, y_n) \otimes W^{k'}(x_{m'}, y_{n'})) \\ 
		                & = \tr( D_{mn} W^k(x_m, y_n))\tr(D'_{m'n'} W^{k'}(x_{m'}, y_{n'}))           \\ 
		                & \leq \epsilon.                                                              
		\end{aligned}
	\end{align*}
	Let $\alpha \in [0,1]$, then 
	\begin{align*}
		\frac{1}{k+k'} & \log(M\cdot M') \\ &\geq \frac{1}{k+k'}\log(2^{k(R_1 - \delta_1)}2^{k'(R'_1 - \delta_2)}) \\
		&= \frac{k}{k+k'}(R_1 - \delta_1) + \frac{k'}{k+k'}(R'_1 - \delta_2) \\
		&= \alpha R_1 + (1-\alpha)R'_1 - \alpha(\delta_1 +\delta_2) + \epsilon_e\\
		&= \alpha R_1 + (1-\alpha)R'_1 - \delta,
	\end{align*} 
	where $k$ and $k'$ can be chosen such that $\frac{k}{k+k'}$ is arbitrarily close to $\alpha$ with $\epsilon_e$ the inaccuracy. We can also choose $\delta_1$ and $\delta_2$  such that $ \delta = \alpha(\delta_1 +\delta_2) - \epsilon_e$. We can make the same arguments to show that $\frac{1}{k+k'}\log(N\cdot N') \geq\alpha R_2 + (1-\alpha)R'_2 - \delta$. Therefore $(\alpha R_1 + (1-\alpha)R'_1, \alpha R_2 + (1-\alpha)R'_2 )$ is achievable and therefore $C_{\max}(\mathbf{W})$ is convex.
\end{proof}

\begin{lemma}[Non-empty maximal error capacity region]\label{max_error_non_empty}
	For CCQ channel $\mathbf{W}$, if $\text{int}(C(\mathbf{W}))$ is non-empty, then $\text{int}(C_{\max}(\mathbf{W}))$ is also non-empty.
\end{lemma}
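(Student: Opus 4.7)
The plan is to exhibit two max-error achievable rate pairs $(R_1^*, 0)$ and $(0, R_2^*)$ on the coordinate axes with $R_1^*, R_2^* > 0$, note that the trivial pair $(0,0)$ lies in $C_{\max}(\mathbf{W})$ (take $M = N = 1$ with $D_{11} = I$, giving zero error), and invoke the already-proved convexity lemma to conclude that the triangle with these three vertices is a non-degenerate subset of $C_{\max}(\mathbf{W})$ and hence has non-empty two-dimensional interior.

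First I would show that non-emptiness of $\text{int}(C(\mathbf{W}))$ forces the existence of a letter $y_0 \in \aly$ for which the fixed-input CQ channel $W_{y_0}:x \mapsto W(x, y_0)$ is non-constant, and, symmetrically, a letter $x_0 \in \alx$ with $W_{x_0}:y\mapsto W(x_0, y)$ non-constant. Suppose, on the contrary, that $W(x,y) = \rho(y)$ for every $x$. Then for any $(k, M, N)$-code $\co$ the output state $W^k(x_m, y_n)$ depends only on $n$ (call it $\rho_n$), and setting $E_n := \sum_m D_{mn}$ yields a POVM $\{E_n\}_{n=1}^N$ by the completeness $\sum_{m,n} D_{mn} = I$, so
\[
\overline{e}(\co, W^k) \;=\; 1 - \frac{1}{MN}\sum_{n=1}^N \tr(E_n \rho_n) \;\geq\; 1 - \frac{N}{MN} \;=\; 1 - \frac{1}{M},
\]
since each $\tr(E_n\rho_n) \leq 1$. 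Any code with $\overline{e} \leq \epsilon < 1$ would then have $M \leq 1/(1-\epsilon)$, forcing the first-coordinate rate to be zero and contradicting $\text{int}(C(\mathbf{W})) \neq \emptyset$.

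Next, since $W_{y_0}$ is a non-constant discrete memoryless single-sender CQ channel, its Holevo capacity is strictly positive, and the standard single-sender expurgation argument (throw away the at most half of codewords whose conditional error exceeds twice the average, losing one bit of rate) shows that every $R_1^* < \chi(W_{y_0})$ is also max-error achievable for $W_{y_0}$. Such a max-error code $(x_m, D_m)_{m=1}^{M_k}$ lifts to a CCQ max-error code by taking $N = 1$ and $y_1 := (y_0, \ldots, y_0)$: the output $W^k(x_m, y_1) = \bigotimes_i W(x_{m,i}, y_0)$ is exactly the CQ-channel output, the error is unchanged, and the second-coordinate rate $\frac{1}{k}\log 1 = 0$ is admissible, so $(R_1^*, 0) \in C_{\max}(\mathbf{W})$. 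The symmetric construction with $x_0$ gives some $R_2^* > 0$ with $(0, R_2^*) \in C_{\max}(\mathbf{W})$, and the first paragraph closes the argument. The only genuine obstacle is the first step—ruling out positive first-coordinate avg-error rates when every slice $W(\cdot, y_0)$ is constant—which is handled cleanly by the elementary POVM-completeness bound $\overline{e} \geq 1 - 1/M$; the rest is just single-sender CQ capacity theory plus convexity.
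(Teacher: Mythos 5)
Your proof is correct, but it reaches the conclusion by a genuinely different route from the paper. The paper works directly with the given average-error $(k,M,N)$-code: a pigeonhole over the second sender's index produces an $n_0$ whose row has average error at most $\epsilon$, and a sort-and-truncate expurgation on that row converts average error to maximal error, yielding axis points that are then combined by convexity. You instead never touch the given code in the key step: you argue by contradiction (via the clean observation that if $W(\cdot,y)$ is constant for every $y$ then $\overline{e}\geq 1-1/M$ because $\{\sum_m D_{mn}\}_n$ is a POVM) that some slice $W(\cdot,y_0)$ must be non-constant, and then import single-sender CQ theory --- strict positivity of the Holevo quantity for a non-constant ensemble, the HSW coding theorem, and the standard Markov-inequality expurgation equating average- and max-error capacity --- to manufacture a fresh max-error code on the axis, finishing with $(0,0)\in C_{\max}(\mathbf{W})$ and convexity (your explicit inclusion of the origin is actually cleaner than the paper's final time-sharing step, which as written only produces a segment). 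What each approach buys: yours is conceptually transparent and the contradiction step is elementary, but it leans on the discrete-memoryless structure in two places (the single-letter reduction $W^k(x^k,y_0^k)=\bigotimes_i W(x_i,y_0)$ and the HSW theorem), whereas the lemma is stated for a general CCQ channel $\mathbf{W}$ and the paper's block-level argument needs no memorylessness; the paper's version also keeps the rate tied to the original point of $C(\mathbf{W})$, while yours only delivers some unrelated positive rates $R_1^*,R_2^*$ --- which is all the lemma's conclusion requires, but is worth stating explicitly. If you want your argument to cover the lemma exactly as stated, either restrict it to the DM case (which is the only case the paper uses) or replace the HSW step with the paper's pigeonhole-plus-expurgation on the given code.
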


\begin{proof}
	Assume $\text{int}(C(\mathbf{W}))\neq \emptyset$ and $(R_1, R_2) \in \text{int}(C(\mathbf{W}))$ and so $R_1\neq 0$ and $R_2 \neq 0$.  From an average error code that achieves the rate pairs $(R_1, R_2)$, we construct two codes that max-error achieve $(R_1, 0)$ and $(0, R_2)$ respectively. Since $C_{\max}(\mathbf{W})$ is convex and closed, by a time sharing argument, the interior will thus be non-empty. 
	
	By definition, since $(R_1, R_2)\in \text{int}(C(\mathbf{W}))$, there exists a $k_0$ such that for all $k\geq k_0$, the $(k,M,N)$ code $\co \coloneqq(x_m,y_n,D_{mn})_{m=1,n=1}^{M,N}$ satisfies, for all $\delta, \epsilon >0$, $\frac{1}{k}\log M \geq R_1 -\delta$, $\frac{1}{k}\log N \geq R_2 -\delta$, and $\overline{e}(\co,W^k)\leq \epsilon$. Using this code, we construct a code in the following way. Since $\overline{e}(\co, W^k) \leq \epsilon$, we can write,
	\begin{align*}
		\epsilon & \geq 1 - \frac{1}{MN}\sum_{\substack{m=1 \\n=1}}^{M,N} \tr(D_{mn} W^{k}(x_m, y_n)) \\ &= \frac{1}{N}\sum_{n=1}^N \left(1 - \frac{1}{M}\sum_{m=1}^M \tr(D_{mn} W^{k}(x_m, y_n)) \right).
	\end{align*}
	Therefore, there exists at least one index $n_0 \in [N]$ such that 
	\begin{align*}
		1 - \frac{1}{M}\sum_{m} \tr(D_{m{n_0}} W^{k}(x_m, y_{n_0})) \leq \epsilon. 
	\end{align*}
	With this $n_0$, we construct the code $(x_m, y_{n_0}, D_{m{n_0}})_{m=1}^M$ which achieves (under average error) $(R_1, 0)$. Note, the decoders no longer form a POVM, but can be modified to form a POVM via an expurgation as in \cite[Ch. 16.5]{wilde_text} with negligible effects for large $k$. We transform this code to a max-error achieve $(R_1, 0)$. Assume without loss of generality that the codewords $x_m$ are ordered such that $\left(1 - \tr(D_{m{n_0}} W^{k}(x_m, y_{n_0})) \right)_{m\in [M]}$ is non-decreasing. For $M' \coloneqq\ciel{2^{(k/2) (R_1-\delta)}}$, let $\lambda \coloneqq 1 - \tr(D_{{M'}{n_0}} W^{k}(x_{M'}, y_{n_0}))$. It holds,
	\begin{align*}
		\epsilon &\geq 1 - \frac{1}{M}\sum_{m=1}^M \tr(D_{m{n_0}} W^{k}(x_m, y_{n_0})) \\
		& \begin{aligned}
		\geq & \frac{1}{\ciel{2^{k(R_1 - \delta)}}}   \sum_{m=1}^{M'-1} \underbrace{1 - \tr(D_{m{n_0}} W^{k}(x_m, y_{n_0}))}_{\geq 0} \\  &+ \frac{1}{\ciel{2^{k(R_1 - \delta)}}}   \sum_{m=M'}^{\ciel{2^{k(R_1 - \delta)}}} \underbrace{1 - \tr(D_{m{n_0}} W^{k}(x_m, y_{n_0}))}_{\geq \lambda} \end{aligned}\\
		&\geq \frac{1}{\ciel{2^{k(R_1 - \delta)}}}(\ciel{2^{k/2(R_1 - \delta)}}) \lambda \\
		&= \lambda/2.
	\end{align*}
	Thus $\lambda \leq 2\epsilon$, $M' \rightarrow M$ as $k\rightarrow\infty$, and $\co'\coloneqq(x_m, y_{n_0}, D_{m{n_0}})_{m=1}^{M'}$ achieves maximal error rate $(R_1,0)$. We can make analogous steps to find another code $\co''$ that achieves maximal error rate $(0,R_2)$. By the convexity of $C_{\max}(\mathbf{W})$, we have that, for all $\alpha \in [0,1], (\alpha R_1, (1-\alpha) R_2)$ is an achievable maximal error rate and thus the interior of $C_{\max}(\mathbf{W})$ is not empty.
\end{proof}

\section{Message Identification}

In this section, we model message identification over a CCQ multiple access channel.

\begin{definition}[Randomized $(k,M)$-ID-code]
	For a CQ channel $\mathbf{W}$, a randomized $(k,M)$-ID-code is a family $\mathcal{C}_{\text{id}} \coloneqq (P_m, I_m)_{m=1}^{M}$, where $P_1, ..., P_M \in \mathcal{P}(\mathcal{X}^k)$ are probability distributions, and for each $m\in[M]$, $I_m \in \mathcal{L}(\mathcal{H}^{\otimes k})$ with $0\leq I_m \leq \mathbbm{1}_{\mathcal{H}^{\otimes k}}$. \\\\ 	
	For an $(k, M)$-ID-code $\cocid$ and CQ channel $\mathbf{W}$, we define two types of errors, 
	\begin{align*}
		e_1(\cocid, W^{k}) & \coloneqq \max_{m\in[M]}  1 - \sum_{x^k\in \mathcal{X}^k} P_m(x^k) \tr\left(I_m W^{ k}(x^k) \right), \\
		e_2(\cocid, W^{k}) & \coloneqq \max_{\substack{m, n \in [M]                                                               \\ m\neq n}} \sum_{x^k\in \mathcal{X}^k}  P_{m}(x^k)\tr(I_n W^{k}(x^k)).
	\end{align*}	
\end{definition}

\begin{definition}[$(k,M,N)$-ID-code]
	For a a CCQ channel $\mathbf{W}$, a $(k,M,N)$-ID-code for classical message identification is the family $\mathcal{C}_{\text{id}} \coloneqq (P_m, Q_n,I_{mn})_{m=1,n=1}^{M,N}$ where $P_1,...,P_M \in \mathcal{P}(\alx^k)$, $Q_1,...,Q_N\in\mathcal{P}(\aly^k)$, and $(I_{mn})_{m=1,n=1}^{M,N}\subseteq \mathcal{L}(\hi^{\otimes k})$ such that $0 \leq I_{mn} \leq \mathbbm{1}_{\hi^{\otimes k}}$ for all $m\in[M]$ and $n \in [N]$.
	\\\\\noindent
	For a $(k,M,N)$-ID-code $\cid$, we define two types of errors,
	\vspace{10mm}
	\begin{align*}
		  & e_1(\cid, W^k) \coloneqq \\ &\max_{\substack{m\in[M] \\ n\in[N]}}  1 -  \sum_{\substack{x^k\in\alx^k \\ y^k\in\aly^k}} P_m(x^k)Q_n(y^k)\tr\left( I_{mn}W^k(x^k,y^k) \right), \\
		&e_2(\cid, W^k) \coloneqq   \\
		&\max_{\substack{m,m'\in[M], \\ n,n'\in[N] \\                                                                                          \\ 
		(m,n)\neq (m',n')}}   \sum_{\substack{x^k\in\alx^k \\ y^k\in\aly^k}}P_m(x^k)Q_n(y^k)\tr\left(I_{m'n'}W^k(x^k,y^k) \right).
	\end{align*}		
\end{definition}

\begin{definition}[Simultaneous $(k,M,N)$-ID-code]
	A $(k,M,N)$-ID-code $\cocid \coloneqq (P_m, Q_n, I_{mn})_{m=1,n=1}^{M,N}$ is called simultaneous if for $R,S\in\mathbb{N}$ there exists a POVM $(E_{rs})_{r=1,s=1}^{R,S}$ with subsets $A_1, ..., A_M \subset [R]$ and $B_1, ..., B_N \subset [S]$ such that for each $m \in [M]$ and $n\in[N]$, 
	\begin{align*} 
		I_{mn} = \sum_{i\in A_m} \sum_{j \in B_n} E_{ij}. 
	\end{align*}
\end{definition}

\begin{definition}[Achievable ID-rate pair]
	For a CCQ channel $\mathbf{W}$, we say $(R_1, R_2) \in \mathbb{R}^2$, $R_1, R_2 \geq 0$, is an achievable ID-rate pair if for all $\epsilon_1, \epsilon_2, \delta > 0$, there exists a $k_0$ such that for all $k \geq k_0$, there is a $(k,M,N)$-ID-code $\cocid$ with 
	\begin{align*}
		\begin{aligned}
		\frac{1}{k} \log\log M \geq R_1 - \delta, & \hspace{3mm} \frac{1}{k}\log\log N \geq R_2 - \delta, \\  e_1(\cocid, W^k) \leq \epsilon_1,  &\hspace{3mm} e_2(\cocid, W^k) \leq \epsilon_2. 
		\end{aligned}
	\end{align*}
	The ID capacity region of $\mathbf{W}$ is defined as 		
	\begin{align*}
		\cid & (\mathbf{W}) \coloneqq \\ &\{(R_1, R_2) \mid (R_1, R_2) \text{ is an achievable ID-rate pair} \}. 
	\end{align*}
\end{definition}

\begin{definition}[Achievable simultaneous ID-rate pair] 
	For a CCQ channel $\mathbf{W}$, we say $(R_1, R_2) \in \mathbb{R}^2$, $R_1, R_2 \geq 0$, we say $(R_1, R_2)$ is an achievable simultaneous ID-rate pair if for $\epsilon_1, \epsilon_2, \delta > 0$, there exists a $k_0$ such that for all $k \geq k_0$ there is a simultaneous $(k,M,N)$-ID-code $\cosimcid$ with
	\begin{align*}
		\begin{aligned}
		\frac{1}{k} \log\log M \geq R_1 - \delta, & \hspace{3mm} \frac{1}{k}\log\log N \geq R_2 - \delta, \\  e_1(\cosimcid, W^k) \leq \epsilon_1,  &\hspace{3mm} e_2(\cosimcid, W^k) \leq \epsilon_2. 
		\end{aligned}
	\end{align*}
	The simultaneous ID capacity region for a CCQ channel $\mathbf{W}$ is defined as		
	\begin{align*}
		  & \cid^{\text{sim}}(\mathbf{W}) \coloneqq \\ &\{(R_1, R_2) \mid (R_1, R_2) \text{ is an achievable sim. ID-rate pair} \}. 
	\end{align*}
\end{definition}

\begin{remark}\label{rem_id}
	Since the simultaneous case is more restrictive, it is clear that 
	\begin{align*}
		\cid^{\text{sim}}(\mathbf{W}) \subseteq \cid(\mathbf{W}). 
	\end{align*}
\end{remark}

\begin{theorem}\label{main_result}
	For a DM-CCQ channel generated by $W:\alx\cross\aly\rightarrow\st(\hi)$, 
	\begin{align*}
		C(W) = \cid^{\text{sim}}(W) 
	\end{align*}
\end{theorem}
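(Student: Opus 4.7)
The plan is to prove $C(W) = \cid^{\text{sim}}(W)$ by establishing the two inclusions separately, using an Ahlswede--Dueck-style transmission-to-identification transformation for achievability and Han--Verd\'u / L\"ober / Steinberg resolvability for the converse. For the achievability direction $C(W) \subseteq \cid^{\text{sim}}(W)$, I would fix $(R_1, R_2) \in \text{int}(C(W))$ and invoke Lemma \ref{max_error_non_empty} to obtain, for large $k$, a $(k, M, N)$-code $\co = (x_r, y_s, D_{rs})$ with maximal error $\epsilon_k \to 0$ and $\frac{1}{k}\log M \geq R_1 - \delta$, $\frac{1}{k}\log N \geq R_2 - \delta$. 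For target ID-sizes $M' \approx 2^{2^{k(R_1 - 2\delta)}}$ and $N' \approx 2^{2^{k(R_2 - 2\delta)}}$, I pick subsets $A_m \subseteq [M]$ and $B_n \subseteq [N]$ independently and uniformly at random of sizes $L_1 = L_2 = 2^{k\tau}$ for a small $\tau > 0$. Defining $P_m = \mathrm{Unif}\{x_r : r \in A_m\}$, $Q_n = \mathrm{Unif}\{y_s : s \in B_n\}$, and $I_{mn} = \sum_{r \in A_m, s \in B_n} D_{rs}$ yields a simultaneous $(k, M', N')$-ID-code with base POVM $\{D_{rs}\}$; the max-error property of $\co$ immediately gives $e_1 \leq \epsilon_k$, and a short computation shows that the expectation of each cross term in $e_2$ is of order $\epsilon_k + L_i^{-1}$. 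A Hoeffding concentration bound combined with a union bound over the $(M'N')^2$ forbidden pairs succeeds because $L_1, L_2$ grow exponentially in $k$ while the bad-event probability decays super-exponentially, so a random realisation achieves the desired simultaneous ID rate with positive probability.

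For the converse $\cid^{\text{sim}}(W) \subseteq C(W)$, I would follow L\"ober's and Steinberg's resolvability argument. Given a simultaneous $(k, M, N)$-ID-code $\cosimcid = (P_m, Q_n, I_{mn})$ with base POVM $\{E_{rs}\}$ and selection sets $A_m, B_n$, measuring $W^k(x^k, y^k)$ with $\{E_{rs}\}$ defines a classical MAC $V^k(r, s \mid x^k, y^k) = \tr(E_{rs} W^k(x^k, y^k))$, whose transmission capacity region contains $C(W)$ by data-processing. The simultaneous structure translates the ID errors into total-variation separation of the induced output distributions $\mu_{mn} := (P_m \otimes Q_n) \circ V^k$ on $[R] \times [S]$: $\mu_{mn}(A_m \times B_n) \geq 1 - \epsilon_1$ while $\mu_{mn}(A_{m'} \times B_{n'}) \leq \epsilon_2$ whenever $(m, n) \neq (m', n')$. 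A CCQ-MAC resolvability lemma, obtained by lifting the Han--Verd\'u covering argument to the product-input setting and pulled back through $V^k$, then asserts that for any $(R_1', R_2')$ outside $C(W)$ every $\mu_{mn}$ is $\epsilon$-close in TV to some $(\mathrm{Unif}(T_1) \otimes \mathrm{Unif}(T_2)) \circ V^k$ with $|T_1| = 2^{kR_1'}$, $|T_2| = 2^{kR_2'}$. Counting the number of such pairs $(T_1, T_2)$ yields the double-exponential bounds $\frac{1}{k}\log\log M \leq R_1' + o(1)$ and $\frac{1}{k}\log\log N \leq R_2' + o(1)$, and letting $(R_1', R_2')$ approach the boundary of $C(W)$ gives $(R_1, R_2) \in C(W)$.

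The hard part will be the converse: assembling a CCQ-MAC resolvability lemma that simultaneously controls both rates and respects the product structure of the input distributions $P_m \otimes Q_n$. The classical-quantum single-user resolvability lemma of L\"ober provides the template, but extending it to two senders requires tracking input types on both the $\alx^k$ and $\aly^k$ sides and verifying that the covering bound remains tight on every face of $C(W)$, including the sum-rate boundary, rather than only at corner points. The reduction to the classical MAC $V^k$ is precisely what makes the simultaneous hypothesis essential — without simultaneity (cf.\ Remark \ref{rem_id}) the quantum measurement would not factor through a single POVM and one could not pull the problem back to a classical resolvability question; I therefore expect the simultaneous structure to enter the argument not as a restriction but as the mechanism that closes the converse.
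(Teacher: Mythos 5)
Your overall architecture (Ahlswede--Dueck transformation for achievability, resolvability for the converse) matches the paper's, but both halves have genuine gaps. In the achievability direction, you start from a \emph{maximal-error} $(k,M,N)$-code at rates $(R_1-\delta,R_2-\delta)$ for an arbitrary $(R_1,R_2)\in\text{int}(C(W))$. Lemma \ref{max_error_non_empty} does not give you this: it only shows $\text{int}(C_{\max}(\mathbf{W}))\neq\emptyset$, and by Dueck's result \cite{capacity_regions_different} the maximal-error region of a MAC can be strictly smaller than the average-error region, so max-error codes at the full target rates need not exist. The paper's Transformator Lemma is built precisely to dodge this: the long ($k$-letter) code is an \emph{average}-error code at the full rate pair, and only the short $\lceil\sqrt{k}\rceil$-letter suffix code needs to be maximal-error, at some possibly tiny positive rate pair $(\epsilon_1,\epsilon_2)$ supplied by Lemma \ref{max_error_non_empty}; the randomness lives entirely in the choice of suffix. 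Your construction with subsets $A_m\subseteq[M]$ of size $L_1=2^{k\tau}$ for \emph{small} $\tau$ also misses the rate: the Chernoff--union-bound argument caps the number of admissible subsets at roughly $\exp(cL_1)$, so $\frac{1}{k}\log\log M'\to\tau$, not $R_1-2\delta$; you would need $L_1$ to be a $1-o(1)$ power of $M$, which makes the reliance on a full-rate max-error code unavoidable in your scheme.

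In the converse, the step you yourself flag as the hard part --- a two-sender resolvability/covering lemma controlling both rates of the induced classical MAC --- is exactly what the paper does \emph{not} prove, and it is not available off the shelf. Instead, the paper fixes the second sender's distribution $Q_j^k$ to collapse the problem to a single-sender classical channel $W_j^k(E^k)$, applies Steinberg's single-user Lemma \ref{steinberg_saviour} for each $j$, and then uses a counting argument ($M-N\exp\exp(k(R_1-\delta-\gamma))>0$, exploiting $R_2=R_1-\mu$) to find one distribution $\tilde{P}_{i^*}^k$ that is good simultaneously for all $j$; the Holevo bound then lifts the classical mutual information to $I(A^k;C^k)_{\gamma_2^k}$. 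Note also that the paper's converse only needs the two individual-rate bounds (the region $C_k$ has no sum-rate constraint; $C_k\subseteq R_k$ is checked separately in Lemma \ref{c_sub_r}), so your worry about tightness on the sum-rate face is moot --- but the price is a final regularization step (Theorem \ref{c_eq_rk}, via subadditivity of mutual information) showing the multi-letter region $C'(W)$ collapses to $C(W)$, which your sketch omits entirely. As written, your proposal would need (i) a repaired achievability construction in the spirit of the transformator lemma and (ii) either a proof of your two-sender resolvability lemma or the paper's fix-one-sender reduction, before it constitutes a proof.
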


\section{Proof of Achievability}

Here we give the proof of achievable for Theorem (\ref{main_result}), and in the next section we give the converse proof. The proof here follows the approach taken by Ahlswede and Dueck in \cite{transformator}.

\begin{theorem}\label{sim_ach}
	For a DM-CCQ channel generated by $W: \alx\cross\aly\rightarrow\st(\hi)$, 
	\begin{align*}
		C(W) \subseteq \simcid(W). 
	\end{align*} 
\end{theorem}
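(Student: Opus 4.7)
The plan is to adapt the Ahlswede--Dueck ``transformator'' technique from \cite{transformator} to the two-sender CCQ setting, using a maximal-error transmission code as the scaffolding on which to build a simultaneous ID code. Fix $(R_1,R_2)\in \text{int}(C(\mathbf{W}))$ (boundary points follow by taking closures and treating the degenerate cases $R_1=0$ or $R_2=0$ via Lemma~\ref{single_sender} together with L\"ober's single-sender result). By Lemma~\ref{max_error_non_empty} combined with the closedness and convexity of $C_{\max}(\mathbf{W})$, for every $\epsilon,\delta>0$ and all sufficiently large $k$ there exists a $(k,T_1,T_2)$-code $\co=(x_i,y_j,D_{ij})_{i,j}$ with $\tfrac{1}{k}\log T_1\geq R_1-\delta$, $\tfrac{1}{k}\log T_2\geq R_2-\delta$, and $e(\co,W^k)\leq\epsilon$.

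Given integers $L_1,L_2$ and subset systems $A_1,\ldots,A_M\subset [T_1]$ and $B_1,\ldots,B_N\subset [T_2]$ with $|A_m|=L_1$ and $|B_n|=L_2$, I define the ID code $\cosimcid:=(P_m,Q_n,I_{mn})$ by letting $P_m$ be uniform on $\{x_i:i\in A_m\}$, $Q_n$ uniform on $\{y_j:j\in B_n\}$, and
\begin{align*}
I_{mn}:=\sum_{i\in A_m}\sum_{j\in B_n}D_{ij}.
\end{align*}
Since each $I_{mn}$ is a rectangular sum over the transmission POVM $\{D_{ij}\}$, the code is immediately simultaneous, with witnessing POVM $E_{ij}:=D_{ij}$ on index set $[T_1]\times[T_2]$.

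Both error quantities are then controlled by the max-error property of $\co$. For $e_1$: for every $(m,n)$ and every $(i,j)\in A_m\times B_n$, $\tr(D_{ij}W^k(x_i,y_j))\geq 1-\epsilon$ and $D_{ij}$ is a summand of $I_{mn}$, giving $e_1(\cosimcid,W^k)\leq\epsilon$. For $e_2$, fix $(m,n)\neq(m',n')$ and split $\tr(I_{m'n'}W^k(x_i,y_j))$ according to whether $(i,j)\in A_{m'}\times B_{n'}$: in the first case bound the trace by $1$, and in the second case the max-error property forces the whole sum $\sum_{(i',j')\in A_{m'}\times B_{n'}}\tr(D_{i'j'}W^k(x_i,y_j))$ into the ``off-diagonal'' budget $\leq\epsilon$. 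Summing and dividing by $L_1L_2$ yields
\begin{align*}
e_2(\cosimcid,W^k)\leq \max_{(m,n)\neq(m',n')}\frac{|A_m\cap A_{m'}|\cdot|B_n\cap B_{n'}|}{L_1L_2}+\epsilon.
\end{align*}

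The main task, and the main obstacle, is constructing the two subset systems so that both ratios $|A_m\cap A_{m'}|/L_1$ (for $m\neq m'$) and $|B_n\cap B_{n'}|/L_2$ (for $n\neq n'$) stay below a small $\lambda$, while $M$ and $N$ are each doubly exponential in $k$ at the target rates. This is exactly the combinatorial content of the classical Ahlswede--Dueck lemma: choose $L_1$ and $L_2$ proportional to $T_1$ and $T_2$, draw the subsets independently and uniformly at random, and apply a Chernoff estimate to a single pairwise intersection followed by a union bound over all pairs. A standard calculation shows that this succeeds with positive probability even when $\log M\geq 2^{k(R_1-2\delta)}$ and $\log N\geq 2^{k(R_2-2\delta)}$, realizing $\tfrac{1}{k}\log\log M\geq R_1-2\delta$ and analogously for $N$. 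Since $(m,n)\neq(m',n')$ forces $m\neq m'$ or $n\neq n'$, at least one factor in the product intersection bound is $\leq\lambda$ and the other is $\leq 1$, so $e_2\leq\lambda+\epsilon$. As $\epsilon,\delta,\lambda$ were arbitrary, every interior rate pair of $C(W)$ is simultaneously ID-achievable, and closing gives $C(W)\subseteq\simcid(W)$.
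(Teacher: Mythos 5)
There is a genuine gap at the very first step: you assume that every $(R_1,R_2)\in \text{int}(C(\mathbf{W}))$ admits a \emph{maximal-error} $(k,T_1,T_2)$-code at essentially the same rates. That is false for multiple access channels. Lemma~\ref{max_error_non_empty} only asserts that $\text{int}(C_{\max}(\mathbf{W}))$ is non-empty when $\text{int}(C(\mathbf{W}))$ is; its proof produces max-error codes only at the corner points $(R_1,0)$ and $(0,R_2)$ and then time-shares, so all it guarantees is a triangle inside $C_{\max}(\mathbf{W})$. The paper explicitly cites Dueck's result that the maximal-error capacity region of a MAC can be \emph{strictly smaller} than the average-error region, so the full-rate max-error scaffold your construction rests on need not exist. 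And your argument genuinely needs maximal error: both your $e_1$ bound (an average over an arbitrary subset $A_m\times B_n$ of the codebook, not over the whole codebook) and your $e_2$ bound (pushing $\sum_{(i',j')\ne(i,j)}\tr(D_{i'j'}W^k(x_i,y_j))$ into an $\epsilon$ budget) use the per-codeword guarantee, not the average one.

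This is precisely the obstacle the paper's transformator construction is designed to circumvent. There, the blocklength is split as $m=k+\lceil\sqrt{k}\rceil$: the length-$k$ part carries a full-rate code $\co'$ required only to have small \emph{average} error, and the ID distributions $P_i$ are uniform over \emph{all} $M'$ of its codewords (only the length-$\lceil\sqrt{k}\rceil$ suffix, drawn from a tiny auxiliary code $\co''$, varies with the ID message), so the first-kind error reduces exactly to the average error of $\co'$. The maximal-error requirement is confined to $\co''$, which operates at a vanishing rate pair $(\epsilon_1,\epsilon_2)$ where Lemma~\ref{max_error_non_empty} does apply; the doubly exponential message count comes from the $(M'')^{M'}$ colorings $A_i:[M']\to[M'']$ rather than from choosing subsets of a single codebook. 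Your combinatorial step (random subsets with small pairwise intersections, Chernoff plus union bound) is sound in itself and mirrors Lemma~\ref{corrrr} as used in the paper, but without repairing the source of the maximal-error code the proof does not establish achievability of all of $C(W)$ — only of the time-sharing triangle contained in $C_{\max}(W)$.
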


\begin{lemma}[Chernov-H\"offding Bound]\label{corrrr}
	Let $M\in\mathbbm{N}$, and let a sequence of random variables $(\psi_i)_{i\in[M]}$ be such that for each $i$,  $\psi_i \in \{0,1\}$. Assume for each $i$ and $\mu,\lambda\in(0,1)$ the expectation value $\mathbb{E}(\psi_i) \leq \mu  < \lambda$. Then it holds,
	\begin{align*}
		\Pr\left( \sum_{j=1}^{M} \psi_j > \lambda M \right) \leq 2^{-M \cdot D(\lambda \parallel \mu)}. 
	\end{align*}
	where  $D(\lambda \parallel \mu)$ is the relative entropy between the probability distributions $(\lambda, 1 -\lambda)$ and $(\mu, 1 -\mu)$.
\end{lemma}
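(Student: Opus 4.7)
The plan is to apply the standard Chernoff method: exponential Markov inequality, then factorize the moment generating function, then optimize over the free parameter. I will work in base $2$ throughout so that the relative entropy appears with the normalization matching the $2^{-M\cdot D(\lambda\parallel\mu)}$ in the statement, and I will proceed under the standing (standard) assumption that the $\psi_i$ are independent, which is the usual convention for this form of the Chernoff--Hoeffding bound.

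First, for any $t>0$, writing $S_M\coloneqq\sum_{j=1}^M\psi_j$, Markov's inequality applied to $2^{tS_M}$ gives
\begin{align*}
\Pr\!\left(S_M>\lambda M\right)=\Pr\!\left(2^{tS_M}>2^{t\lambda M}\right)\leq 2^{-t\lambda M}\,\mathbb{E}\!\left[2^{tS_M}\right].
\end{align*}
Independence factorizes the expectation as $\prod_{j=1}^M\mathbb{E}[2^{t\psi_j}]$. Since $\psi_j\in\{0,1\}$, one has $\mathbb{E}[2^{t\psi_j}]=1-\mathbb{E}[\psi_j]+\mathbb{E}[\psi_j]\,2^t$, and for $t>0$ this is monotonically increasing in $\mathbb{E}[\psi_j]$, so the hypothesis $\mathbb{E}[\psi_j]\leq\mu$ yields the per-term bound $1-\mu+\mu\,2^t$. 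Putting it together,
\begin{align*}
\Pr\!\left(S_M>\lambda M\right)\leq\Bigl(2^{-t\lambda}(1-\mu+\mu\,2^t)\Bigr)^M.
\end{align*}

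Second, I would optimize the bracketed quantity over $t>0$. Differentiating $f(t)\coloneqq-t\lambda+\log_2(1-\mu+\mu\,2^t)$ and setting $f'(t)=0$ gives the minimizer $2^{t^*}=\frac{\lambda(1-\mu)}{\mu(1-\lambda)}$, which is strictly greater than $1$ precisely because $\lambda>\mu$, so $t^*>0$ lies in the admissible range. At $t^*$ the normalization simplifies to $1-\mu+\mu\,2^{t^*}=\frac{1-\mu}{1-\lambda}$, and
\begin{align*}
f(t^*) &= -\lambda\log_2\frac{\lambda(1-\mu)}{\mu(1-\lambda)}+\log_2\frac{1-\mu}{1-\lambda} \\
&= -\lambda\log_2\frac{\lambda}{\mu}-(1-\lambda)\log_2\frac{1-\lambda}{1-\mu} \\
&= -D(\lambda\parallel\mu),
\end{align*}
yielding the claimed bound $\Pr(S_M>\lambda M)\leq 2^{-M\cdot D(\lambda\parallel\mu)}$.

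There is essentially no conceptual obstacle; the argument is textbook. The two points that warrant attention are (i) noting that the statement tacitly requires independence (or at least a product-form moment generating function bound) in order for the factorization step to go through, and (ii) executing the optimization algebra cleanly enough that $f(t^*)$ collapses to exactly the relative entropy $D(\lambda\parallel\mu)$ rather than a weaker Gaussian-type surrogate such as the Hoeffding bound $2(\lambda-\mu)^2$.
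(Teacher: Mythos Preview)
Your argument is correct and is the standard Chernoff derivation; there is nothing to add on the mathematics. The paper itself does not prove this lemma at all---it is stated as a known result and then invoked in the Transformator lemma---so there is no ``paper's own proof'' to compare against. Your remark that the statement tacitly requires independence is on point: the lemma as written omits this hypothesis, but when the paper applies it (to the indicators $\psi_i(\overline{\mathcal{U}}_2)$ and $\phi_j(\overline{\mathcal{V}}_2)$) it explicitly checks that the relevant random variables are independent before invoking the bound.
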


\begin{lemma}[Transformator lemma for a DM-CCQ channel]\label{transformator}
    For the DM-CCQ channel generated by $W:\alx\times\aly\rightarrow\st(\hi)$ with an achievable rate $(R_1, R_2) \in C(W)$, there is a simultaneous $(n, M', N')$-ID-code $\cocid^{\text{sim}}$ that achieves the simultaneous ID-rate $(R_1, R_2)$. 
\end{lemma}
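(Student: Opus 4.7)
The plan is to adapt the Ahlswede--Dueck ``transformator'' construction to the CCQ multiple access setting. Starting from a max-error transmission $(k, M, N)$-code $\co = (x_m, y_n, D_{mn})_{m,n=1}^{M,N}$ at rates arbitrarily close to $(R_1, R_2)$---obtained by taking an average-error code guaranteed by $(R_1, R_2) \in C(W)$ and expurgating as in the proof of Lemma~\ref{max_error_non_empty}, paying only a vanishing rate loss---I will define an ID-code whose messages are indexed by independent random sub-collections of the transmission indices. Concretely, fix $0 < \delta_K < \delta$ and $0 < \delta_L < \delta$ and set $K = \lceil 2^{k(R_1 - \delta_K)} \rceil$, $L = \lceil 2^{k(R_2 - \delta_L)} \rceil$, $M' = \lfloor 2^{2^{k(R_1 - \delta)}} \rfloor$, $N' = \lfloor 2^{2^{k(R_2 - \delta)}} \rfloor$. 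Draw subsets $A_1, \dots, A_{M'} \subseteq [M]$ and $B_1, \dots, B_{N'} \subseteq [N]$ independently, each $A_m$ uniform among $K$-subsets and each $B_n$ uniform among $L$-subsets. Define $P_m$ to be uniform on $\{x_r : r \in A_m\}$, $Q_n$ uniform on $\{y_s : s \in B_n\}$, and
\begin{align*}
I_{mn} \coloneqq \sum_{r \in A_m} \sum_{s \in B_n} D_{rs}.
\end{align*}
Choosing the POVM $(E_{rs})_{r,s=1}^{M,N} \coloneqq (D_{rs})_{r,s=1}^{M,N}$ exhibits the resulting code as simultaneous.

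The $e_1$-bound is deterministic once $\co$ has max error $\epsilon_0$: since $I_{mn} - D_{rs}$ is a sum of positive operators for each $(r,s) \in A_m \times B_n$, the transmission inequality gives $\frac{1}{KL} \sum_{(r,s) \in A_m \times B_n} \tr(I_{mn} W^k(x_r, y_s)) \geq 1 - \epsilon_0$ for every $(m,n)$. For $e_2$ I fix $(m,n) \neq (m',n')$, expand $\tr(I_{m'n'} W^k(x_r, y_s))$, isolate the $(r',s') = (r,s)$ term inside the sum over $(r',s') \in A_{m'} \times B_{n'}$, and bound the remainder using $\sum_{(r',s') \neq (r,s)} D_{r's'} \leq \mathbbm{1} - D_{rs}$ together with the max-error bound. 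This gives $\tr(I_{m'n'} W^k(x_r, y_s)) \leq \mathbbm{1}[r \in A_{m'}, s \in B_{n'}] + \epsilon_0$, and averaging over $(r,s) \in A_m \times B_n$ yields
\begin{align*}
e_2(\cocid^{\text{sim}}, W^k) \leq \max_{(m,n) \neq (m',n')} \frac{|A_m \cap A_{m'}| \cdot |B_n \cap B_{n'}|}{KL} + \epsilon_0.
\end{align*}
So the only probabilistic task left is to control $|A_m \cap A_{m'}|/K$ (and $|B_n \cap B_{n'}|/L$) simultaneously for all $m \neq m'$ (respectively $n \neq n'$).

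For fixed $A_{m'}$ and $A_m$ sampled i.i.d.\ from $[M]$, each indicator $\mathbbm{1}[j \in A_m]$ with $j \in A_{m'}$ has mean $K/M = 2^{-k\delta_K}$. Lemma~\ref{corrrr} with any fixed small $\lambda > 0$ gives $\Pr[|A_m \cap A_{m'}| > \lambda K] \leq 2^{-K \cdot D(\lambda \| 2^{-k\delta_K})}$; since $D(\lambda \| \mu) \geq \lambda \log_2(1/\mu) + O(1) = \Omega(k \delta_K)$ for small $\mu$, the failure bound is at most $2^{-\Omega(k \delta_K \cdot 2^{k(R_1 - \delta_K)})}$, which dominates $\binom{M'}{2} \leq 2^{2 \cdot 2^{k(R_1 - \delta)}}$ for all large $k$ precisely because $\delta_K < \delta$. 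A symmetric estimate handles the $B$'s, so a union bound over the $O(M'^2 + N'^2)$ offending pairs produces a realization of the subsets with every intersection ratio at most $\lambda$. Choosing $\lambda$ and $\epsilon_0$ small enough then yields a simultaneous $(k, M', N')$-ID-code with $e_1 \leq \epsilon_1$, $e_2 \leq \lambda^2 + \epsilon_0 \leq \epsilon_2$, and $\frac{1}{k}\log\log M' \geq R_1 - \delta$, $\frac{1}{k}\log\log N' \geq R_2 - \delta$. The main obstacle I expect is the preliminary step: Lemma~\ref{max_error_non_empty} only yields max-error codes at corner points $(R_1, 0)$ and $(0, R_2)$ and then invokes time-sharing, whereas the transformator argument above wants a max-error code whose rate pair lies genuinely near $(R_1, R_2)$ itself; a careful two-dimensional expurgation (or a modified probabilistic bound tolerating a small fraction of bad $(r,s)$) is required, after which the concentration estimates are routine.
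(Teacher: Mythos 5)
Your construction is the ``random subset'' version of the Ahlswede--Dueck argument rather than the paper's two-block ``coloring'' version, and the combinatorial core (intersection bounds via Lemma~\ref{corrrr} plus a union bound over doubly exponentially many pairs) is sound. But the obstacle you flag at the end is not a technicality to be cleaned up later --- it is fatal to the approach as written. Your whole error analysis (both the deterministic $e_1$ bound and the step $\tr(I_{m'n'}W^k(x_r,y_s))\leq \mathbbm{1}[r\in A_{m'},\, s\in B_{n'}]+\epsilon_0$) rests on having a \emph{max-error} $(k,M,N)$-code whose rate pair is close to $(R_1,R_2)$. For multiple access channels such a code need not exist: by Dueck's result \cite{capacity_regions_different}, cited in the introduction of this paper, the maximal-error capacity region is in general strictly smaller than the average-error region, so no amount of expurgation converts an average-error code at $(R_1,R_2)$ into a max-error code at essentially the same rates. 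Lemma~\ref{max_error_non_empty} deliberately claims only that $C_{\max}(\mathbf{W})$ has nonempty interior, i.e.\ that \emph{some} small positive max-error rate pair $(\epsilon_1,\epsilon_2)$ is achievable --- not one near $(R_1,R_2)$.

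The paper's proof is built precisely to route around this. It concatenates a long block of length $k$ carrying the transmission index via the \emph{average-error} code $\co'$ at rates $(R_1,R_2)$ with a short block of length $\lceil\sqrt{k}\rceil$ carrying the ``color'' $A_i(a)$, $B_j(b)$ via a \emph{max-error} code $\co''$ at the tiny rates $(\epsilon_1,\epsilon_2)$ from Lemma~\ref{max_error_non_empty}. Because each $P_i$ is uniform over \emph{all} $M'$ codewords of $\co'$, the first-kind error is exactly the average error of $\co'$ plus the max error of $\co''$ --- no per-codeword guarantee on $\co'$ is needed --- and the short block contributes negligibly to the rate. In your scheme, by contrast, $P_m$ is uniform over a sparse random sub-collection, so even the first-kind error would require either a max-error guarantee (unavailable) or an additional doubly-exponentially-strong concentration argument showing that the average error over every random product set $A_m\times B_n$ stays small; the second-kind bound has the same problem in its $+\epsilon_0$ term. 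Unless you supply that concentration argument for average-error codes (nontrivial, since $A_m\times B_n$ is a product of independent random sets, not a uniform random subset of $[M]\times[N]$), the proof does not go through.
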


\begin{proof}
	Assume for DM-CCQ generated by $W$ that $\text{int}(C(W))\neq \emptyset$, otherwise the problem is reduced to the single sender case, a problem solved in \cite{loeber_id_cap}. For a rate pair $(R_1, R_2) \in C(W)$, by definition, there exists a $k_0$ such that for all $k\geq k_0$, there is a $(k,M',N')$-code $\co' \coloneqq (u_i', v_j', D'_{ij})_{i = 1, j =1}^{M', N'}$ that achieves $(R_1,R_2)$	with $\overline{e}(\co', W^k) \leq \lambda(k)$, where $\lambda(k)\rightarrow 0$ as $k\rightarrow\infty$. Further, since int$(C(W))$ is non-empty, by Lemma \ref{max_error_non_empty}, there is a non-trivial achievable maximal error rate pair $(\epsilon_1,\epsilon_2) \in C_{\text{max}}(W)$ and thus a $k_0'$ such that for $k'\geq k_0'$, there is a $(\ciel{\sqrt{k}}, M'', N'')$-code $\co'' \coloneqq (u''_i, v''_j, D''_{ij})_{i=1,j=1}^{M'',N''}$, $M'' = 2^{\ciel{\sqrt{k}}\epsilon_1}$ and $N'' = 2^{\ciel{\sqrt{k}}\epsilon_2}$, with $e(\co'', W^{\ciel{\sqrt{k}}}) \leq \lambda(\sqrt{k})$, with $\lambda(\sqrt{k}) \rightarrow 0$ as $k\rightarrow\infty$, where the assumption that $\ciel{\sqrt{k}} \geq k_0'$ is made. We define $m\coloneqq k + \ciel{\sqrt{k}}$ and two families of maps 
	\begin{align*}
		\mathcal{A} & \coloneqq ( A_i : [M'] \rightarrow [M''])_{i=1}^{M}  \\
		\mathcal{B} & \coloneqq ( B_j : [N'] \rightarrow [N''])_{j=1}^{N}. 
	\end{align*}
	With these families of maps, we define an $(m, M, N)$-ID-code $(P_i, Q_j, I_{ij})_{i=1,j=1}^{M,N}$ where
	\begin{align*}
		P_i(x^m) \coloneqq 
		\begin{cases}
		\frac{1}{M'} & \text {if } \exists a \in [M'] : x^m = u_a'\cdot u''_{A_i(a)} \\
		0            & \text{otherwise}                                              
		\end{cases},
	\end{align*}
	and
	\begin{align*}
		Q_j(y^m) \coloneqq 
		\begin{cases}
		\frac{1}{N'} & \text {if } \exists b \in [N'] : y^m = v_b'\cdot v''_{B_j(b)} \\
		0            & \text{otherwise}                                              
		\end{cases}.
	\end{align*}
	The identifiers are defined as
	\begin{align*}
		I_{ij} \coloneqq  \sum_{a=1}^{M'}\sum_{b=1}^{ N'} D'_{ab} \otimes D''_{A_i(a)B_j(b)}. 
	\end{align*}
	We show that with this structure, there exists a random construction of $\mathcal{A}$ and $\mathcal{B}$ such that there is a simultaneous ID-code which achieves the simultaneous ID-rate pair $(R_1,R_2)$. 
	
	For $a\in[M']$ and $b\in[N']$ define the random variables $U_a$ such that 
	\begin{align*}
		\Pr(U_a = u_a'\cdot u_c'') = \frac{1}{M''} 
	\end{align*}
	with $c\in[M'']$, and $V_b$ such that \begin{align*}
	\Pr(V_b = v_b'\cdot v_d'') = \frac{1}{N''}
	\end{align*} 
	with $d \in [N'']$. For $(i, j) \in [M]\times [N]$, define
	\begin{align*}
		\ou{U}_i & \coloneqq \{ U_a \}_{a=1}^{M'}  \\
		\ou{V}_j & \coloneqq \{ V_b \}_{b=1}^{N'}, 
	\end{align*}
	Let $\overline{P}_i$ and $\overline{Q}_j$ be the uniform distributions on $\ou{U}_i$  and $\ou{V}_j$ respectively. Define the random identifier
	\begin{align*}
		\mathcal{I}(\ou{U}_i, \ou{V}_j) \coloneqq \sum_{a=1}^{M'}\sum_{b=1}^{N'} D(U_a, V_b) 
	\end{align*}
	where $D(U_a, V_b) \coloneqq D'_{ab} \otimes D''_{p_a q_b}$ when $U_a = u'_a \cdot u''_{p_a}$ and $V_b = v'_b\cdot v''_{q_b}$. With this, we can construct a random ID-code $(\overline{P}_i, \overline{Q}_j, \mathcal{I}(\ou{U}_i, \ou{V}_j)_{i = 1, j= 1}^{M,N}$. 
	
	We analyze the errors of realizations of the random code. Let $r\coloneqq \ciel{\sqrt{k}}$ and fix any two realizations $\mathcal{U}_i$ of $\ou{U}_i$ and $\mathcal{V}_j$ of $\ou{V}_j$.
	\begin{align*}
		1 &- \sum_{\substack{x^m \in \alx^m \\ y^m \in\aly^m }} P_i(x^m)Q_j(y^m) \tr( \mathcal{I}(\mathcal{U}_i, \mathcal{V}_j) W^m(x^m, y^m) ) \\ 
		&= 1 - \frac{1}{M'N'} \sum_{\substack{u \in \mathcal{U}_i \\ v\in \mathcal{V}_j}} \tr( \mathcal{I}(\mathcal{U}_i, \mathcal{V}_j) W^m(u, v) ) \\
		  & \begin{aligned} = 1 -    &   \\ &  
		\begin{aligned} 
		\frac{1}{M'N'} \sum_{\substack{u'_i\cdot u''_{p_i} \in \mathcal{U}_i \\ v'_j \cdot v''_{q_j} \in \mathcal{V}_j}} &\text{tr} \left[ \sum_{\substack{i'=1 \\ j'=1}}^{M', N'} \left(D'_{i'j'}\otimes D''_{p_{i'} q_{j'}}\right) \right. \\  
		&\left. \hspace{5mm} \cdot W^m(u'_i\cdot u''_{p_i}, v'_j \cdot v''_{q_j}) \right] 
		\end{aligned}
		\end{aligned} \\
		  & \begin{aligned} \leq 1 - &   \\ &  
		\begin{aligned} 
		\frac{1}{M'N'} \sum_{\substack{u'_i\cdot u''_{p_i} \in \mathcal{U}_i \\ v'_j \cdot v''_{q_j} \in \mathcal{V}_j}} &\text{tr} \left[ \left(D'_{ij}\otimes D''_{p_{i} q_{j}}\right) \right. \\  
		&\left. \hspace{5mm} \cdot W^m(u'_i\cdot u''_{p_i}, v'_j \cdot v''_{q_j}) \right] 
		\end{aligned}
		\end{aligned}\\
		&\begin{aligned}
		= 1 -  &\underbrace{\frac{1}{M'N'} \sum_{\substack{u'_i\cdot u''_{p_i} \in \mathcal{U}_i \\ v'_j \cdot v''_{q_j} \in \mathcal{V}_j}}  \tr( D'_{ij} W^k(u'_i, v'_j))}_{> 1 - \lambda(k)}  \\  
		&\hspace{20mm}\cdot\underbrace{\tr ( D''_{p_i q_j} W^r(u''_{p_i},  v''_{q_j}) )}_{> 1 - \lambda(\sqrt{k})}
		\end{aligned} \\
		&< \lambda(k) + \lambda(\sqrt{k}).
	\end{align*}	
	
    For the second type of error, we can start by considering the error between realizations $\mathcal{U}_1$ of $\ou{U}_1$ and $\mathcal{V}_1$ of $\ou{V}_1$ and random sets $\ou{U}_2$ and $\ou{V}_2$. We define two random functions, with the $i$th element of a realization of $\ou{U}_2$ denoted $U_i^2$ and similarly the $j$th element of $\ou{V}_2$ as $V_j^2$,
    \begin{align*}
        \psi_i (\ou{U}_2) \coloneqq 
        \begin{cases}
            1, & \text{if } U^2_i \in \mathcal{U}_1 \\
            0, & \text{otherwise}
        \end{cases}, \\
        \phi_j (\ou{V}_2) \coloneqq 
        \begin{cases}
            1, & \text{if } V^2_j \in \mathcal{V}_1 \\
            0, & \text{otherwise}
        \end{cases}.
    \end{align*}	
    Note that because each $U_i^2$ is independent of the other elements of $\overline{\mathcal{U}}_2$ and each $V_j^2$ is independent of the other elements of $\overline{\mathcal{V}}_2$, for $i\neq j$,  $\psi_i(\ou{U}_2)$ is independent of $\psi_j(\ou{U}_2)$  and  $\phi_j(\ou{V}_2)$ is independent of $\phi_i(\ou{V}_2)$. Further, it is easy to see $\forall i\in[M'], j\in[N']$,
    \begin{align*}
        \mathbb{E}(\psi_i(\ou{U}_2)) = \frac{1}{M''} \\
        \mathbb{E}(\phi_j(\ou{V}_2)) = \frac{1}{N''}, 
    \end{align*} 
    since the $\psi_i(\ou{U}_2) = 1$ when the ending of $U_i^2$ is equal to the ending of the $i$th element of $\mathcal{U}_1$ which occurs with $1/M''$ chance, and the analogous for $\phi_j(\ou{V}_2)$. For $\lambda\in(0,1)$ and that $\epsilon_1 = \log(M'')/r$, and $\epsilon_2 = \log(N'')/r$,
    \begin{align*}
        D&\left (\lambda \parallel \frac{1}{M''}\right )  \\ 
        &= \lambda \log(\lambda 2^{r\epsilon_1}) + (1-\lambda)\log(\frac{1-\lambda}{1-2^{-r\epsilon_1}}) \\
        & \begin{aligned} = \lambda &\log(\lambda ) + \lambda\log (2^{r\epsilon_1}) + (1-\lambda)\log(1-\lambda)
        \\ &- (1-\lambda)\log({1-2^{-r\epsilon_1}}) 	
        \end{aligned}\\
        &\geq \lambda\log (2^{r\epsilon_1}) + \log(0.5) \\
        &\geq \lambda  \sqrt{k} \epsilon_1 -1, 
    \end{align*}
    where we use the fact that $\lambda=0.5$ minimizes $\lambda \log(\lambda ) +(1-\lambda)\log(1-\lambda)$. Similarly, 
    \begin{align*}
        D\left (\lambda \parallel \frac{1}{N''}\right ) \geq \lambda \sqrt{k}\epsilon_2 - 1.
    \end{align*}
    Moreover, for any two realizations $\mathcal{U}_2$ and $\mathcal{V}_2$ when 
    \begin{align}\label{cond_1}
        \begin{aligned}
        (u_i \cdot u_{p_i} \notin \mathcal{U}_2 \text{ or } v_j\cdot v_{q_j} \notin \mathcal{V}_2)  \\ \nd (u_i \cdot u_{p_i} \in\mathcal{U}_1 \nd v_j\cdot v_{q_j} \in \mathcal{V}_1)
        \end{aligned}
    \end{align} 
    is true, it holds that,
    \begin{align*}
        \mathcal{I}(\mathcal{U}_2, \mathcal{V}_2) &= \sum_{\substack{i'=1\\j'=1}}^{M',N'} D'_{i'j'} \otimes D''_{p_{i'}q_{j'}} \\
        &\leq \sum_{\substack{i'=1 \\ j'=1}}^{M',N'} D'_{i'j'} \otimes \left( \mathbbm{1}_{\hi^{\otimes r}} - D''_{p_i q_j} \right) \\
        &\begin{aligned}
        =&D'_{ij} \otimes \left( \mathbbm{1}_{\hi^{\otimes r}} - D''_{p_i q_j}\right)  \\ 
        &+  \sum_{\substack{i'=1, i'\neq i \\ j'=1, j'\neq j}}D'_{i'j'} \otimes \left(\mathbbm{1}_{\hi^{\otimes r}} - D''_{p_i q_j}\right)
        \end{aligned} \\
        &\begin{aligned}
        =& D'_{ij} \otimes \left( \mathbbm{1}_{\hi^{\otimes r}} - D''_{p_i q_j} \right)  \\ 
        &+  \left(\mathbbm{1}_{\hi^{\otimes k}}-D'_{ij}\right) \otimes \left(\mathbbm{1}_{\hi^{\otimes r}} - D''_{p_i q_j}\right),
    \end{aligned}
    \end{align*}
        where the first inequality is true because when condition ($\ref{cond_1}$) holds, $D''_{p_{i'}q_{j'}}$ will never equal $D''_{p_{i}q_{j}}$. Since the $D''_{ij}$ form a POVM, the inequality holds. Therefore,
        \begin{align*}
        &\frac{1}{M'N'} \sum_{\substack{u \in \mathcal{U}_1 \setminus \mathcal{U}_2 \\ v \in \mathcal{V}_1 \setminus \mathcal{V}_2 } } \tr( \mathcal{I}(\mathcal{U}_2, \mathcal{V}_2)  W^{m}(u,v)) \\ 
        &\leq \frac{1}{M'N'} \sum_{\substack{u \in \mathcal{U}_1  \\ v \in \mathcal{V}_1 } } \tr( \mathcal{I}(\mathcal{U}_2, \mathcal{V}_2) W^{m}(u,v))\\ 
        & \begin{aligned}
        \leq\frac{1}{M'N'}\sum_{\substack{u'_i \cdot u''_{p_i} \in \mathcal{U}_1  \\ v'_j\cdot v''_{q_j} \in \mathcal{V}_1 }}&  \text{tr} \left[ \left(D'_{ij} \otimes \left(\mathbbm{1}_{\hi^{\otimes r}}- D''_{p_i q_j}\right)\right) \right.	\\ 
        &\hspace{2mm} \cdot W^m(u'_i \cdot u''_{p_i}, v'_j\cdot v''_{q_j})  \\
        & \hspace{-5mm} + \left( \left(\mathbbm{1}_{\hi^{\otimes k}} - D'_{ij}\right) \otimes 	\left(\mathbbm{1}_{\hi^{\otimes r}}- D''_{p_i q_j}\right)\right) \\
        & \left. \hspace{2mm}\cdot W^m(u'_i\cdot u''_{p_i},v'_j\cdot v''_{q_j}) \right]
        \end{aligned}\\	
        &  \begin{aligned}=
        &\frac{1}{M'N'} \sum_{\substack{u'_i \cdot u''_{p_i} \in \mathcal{U}_1  \\ v'_j\cdot v''_{q_j} \in \mathcal{V}_1 } } \tr(D'_{ij} W^k(u'_i,v'_j))   \\
        & \hspace{20mm} \cdot\tr((\mathbbm{1}_{\hi^{\otimes r}} - D''_{p_i q_j})W^r(u''_{p_i}, v''_{q_j}))  \\ 
        & +  \frac{1}{M'N'} \sum_{\substack{u'_i \cdot u''_{p_i} \in \mathcal{U}_1  \\ v'_j\cdot v''_{q_j} \in \mathcal{V}_1 } } \tr((\mathbbm{1}_{\hi^{\otimes k}} - D'_{ij}) W^k(u'_i,v'_j ))  \\ 
        &\hspace{20mm} \cdot \tr(( \mathbbm{1}_{\hi^{\otimes r}}- D''_{p_i q_j}) W^r(u''_{p_i}, v''_{q_j}))  \\
        \end{aligned}\\	
        &\leq \lambda(\sqrt{k}) + \lambda(k)\lambda(\sqrt{k})\\
        &=: \lambda_k, 
    \end{align*}
    With this, we have,
    \begin{align*}
        & \hspace{-5mm}\sum_{\substack{x^m \in \alx^m \\ y^m \in\aly^m}} P_1(x^m) Q_1(Y^m) \tr( \mathcal{I}(\ou{U}_2, \ou{V}_2) W^m(x^m, y^m )) \\ 
        &=  \frac{1}{M'N'} \sum_{\substack{u \in \mathcal{U}_1 \\ v \in \mathcal{V}_1 }} \tr(\mathcal{I}(\ou{U}_2, \ou{V}_2) W^m(u, v )) \\
        & \begin{aligned} &= \frac{1}{M'N'} \sum_{\substack{u \in \mathcal{U}_1 \cap \ou{U}_2  \\ v \in \mathcal{V}_1 \cap \ou{V}_2 }} \tr(\mathcal{I}(\ou{U}_2, \ou{V}_2) W^m(u, v ))
        \\ & \hspace{5mm}+ \frac{1}{M'N'} \sum_{\substack{u \in \mathcal{U}_1 \setminus \ou{U}_2  \\ v \in \mathcal{V}_1 \setminus \ou{V}_2 }}\tr(\mathcal{I}(\ou{U}_2, \ou{V}_2) W^m(u, v)) \\
        &  \hspace{5mm} + \frac{1}{M'N'} \sum_{\substack{u \in \mathcal{U}_1 \setminus \ou{U}_2  \\ v \in \mathcal{V}_1 \cap \ou{V}_2 }}\tr(\mathcal{I}(\ou{U}_2, \ou{V}_2) W^m(u, v)) \\
        &\hspace{5mm} +  \frac{1}{M'N'} \sum_{\substack{u \in \mathcal{U}_1 \cap \ou{U}_2  \\ v \in \mathcal{V}_1 \setminus \ou{V}_2 }}\tr(\mathcal{I}(\ou{U}_2, \ou{V}_2) W^m(u, v))
        \end{aligned}\\
        &\leq \frac{1}{M'N'} (|\mathcal{U}_1 \cap \ou{U}_2| \cdot |\mathcal{V}_1 \cap \ou{V}_2|)+ 3\lambda_k\\
        &=  \frac{1}{M'N'}\left( \sum_{i = 1}^{M'} \psi_i(\ou{U}_2) \cdot \sum_{j = 1}^{N'} \phi_j(\ou{V}_2) \right) + 3\lambda_k. 
    \end{align*}
    Using Lemma \ref{corrrr} twice, we have that for $\lambda \in (0,1)$ and $k$ large enough such that both $1/M'' < \lambda$ and $1/N'' < \lambda$, with non-zero probability, it holds that 
    \begin{align*}
        \frac{1}{M'}\sum_{i=1}^{M'} \psi_i(\ou{U}_2) < \lambda \hspace{4mm} \nd \hspace{4mm} \frac{1}{N'}\sum_{j=1}^{N'} \phi_j(\ou{V}_2) < \lambda.
    \end{align*}
    Therefore with non-zero probability, 
    \begin{align*}
        \frac{1}{M'N'}\left( \sum_{i = 1}^{M'} \psi_i(\ou{U}_2) \cdot \sum_{j = 1}^{N'} \phi_j(\ou{V}_2) \right) + 3\lambda_k \leq \lambda^2 + 3\lambda_k,
    \end{align*}
    which implies 
    \begin{align}\label{b0}
        \begin{aligned}
        \hspace{-5mm}\sum_{\substack{x^m \in \alx^m \\ y^m \in\aly^m }}P_1(x^m) Q_1(y^m) &\tr( \mathcal{I}(\ou{U}_2, \ou{V}_2) W^m(x^m, y^m )) \\& \hspace{20mm} \leq \lambda^2 + 3\lambda_k.
        \end{aligned}
    \end{align}
    Similar arguments can be made to show that,
    \begin{align}
        &\begin{aligned}
        \sum_{\substack{x^m \in \alx^m \\ y^m \in\aly^m }} \overline{P}_2(x^m) \overline{Q}_2(y^m) &\tr( \mathcal{I}(\mathcal{U}_1, \mathcal{V}_1) W^m(x^m, y^m ))\\ &\hspace{20mm}\leq \lambda^2 + 3\lambda_k \label{b1} 
        \end{aligned}\\
        &\begin{aligned}
        \sum_{\substack{x^m \in \alx^m \\ y^m \in\aly^m }} \overline{P}_2(x^m) Q_1(y^m) &\tr( \mathcal{I}(\mathcal{U}_1, \ou{V}_2) W^m(x^m, y^m )) \\ &\hspace{20mm}\leq \lambda^2 + 3\lambda_k \label{b2}
        \end{aligned}\\
        &\begin{aligned}
        \sum_{\substack{x^m \in \alx^m \\ y^m \in\aly^m }} P_1(x^m) \overline{Q}_2(y^m) &\tr( \mathcal{I}(\ou{U}_2, \mathcal{V}_1) W^m(x^m, y^m )) \\ &\hspace{20mm}\leq \lambda^2 + 3\lambda_k.\label{b3}
        \end{aligned}
    \end{align}
    Hence there exists a realizations of $\ou{U}_2$, $\mathcal{U}_2$ and $\ou{V}_2$, $\mathcal{V}_2$ such that  (\ref{b0}), (\ref{b1}), (\ref{b2}), and (\ref{b3}) are satisfied. We follow the argumentation of \cite{ahlswede_dueck_89}. With $\mathcal{U}_1$, $\mathcal{U}_2$, $\mathcal{V}_1$ and $\mathcal{V}_2$, with positive probability, it holds that $|\mathcal{U}_1 \cap \mathcal{U}_2| \leq \lambda M'$ and  $|\mathcal{V}_1 \cap \mathcal{V}_2| \leq \lambda N'$.  we would like to add two elements $\mathcal{U}_3$ and $\mathcal{V}_3$ such that $|\mathcal{U}_1 \cap \mathcal{U}_3| \leq \lambda M'$, $|\mathcal{U}_2 \cap \mathcal{U}_3| \leq \lambda M'$, $|\mathcal{V}_1 \cap \mathcal{V}_3| \leq \lambda N'$, and $|\mathcal{V}_2 \cap \mathcal{V}_3| \leq \lambda N'$, which implies that the second kind errors will hold for all elements in the code. We bound probability that such a $\mathcal{U}_3$ and $\mathcal{V}_3$ do not simultaneously exist with,
    \begin{align*}
    	\begin{aligned}
    	  & 2\cdot\Pr(\sum_{k=1}^{M'} \psi_k(\ou{U}_i) > \lambda M' ) \\ &+ 2 \cdot\Pr(\sum_{l=1}^{N'} \phi_l(\ou{V}_j) > \lambda N') < 1,
    	\end{aligned}		
    \end{align*}
    where the analogous of $\psi_k$ and $\phi_k$ are defined. We repeat the argument for $i = 4,...,M$ and $j=4,...,N$ and it should hold for existence that, 
    \begin{align*}
    	\begin{aligned}
    	  & (M-1)\cdot\Pr(\sum_{k=1}^{M'} \psi_k(\ou{U}_i) > \lambda M' ) \\ &+ (N-1) \cdot\Pr(\sum_{l=1}^{N'} \phi_l(\ou{V}_j) > \lambda N') < 1,
    	\end{aligned}		
    \end{align*}
    We can therefore enforce that for all $i = 2,...,M$, 
    \begin{align*}
    	(M-1)\cdot\Pr(\sum_{k=1}^{M'} \psi_k(\ou{U}_i) > \lambda M' ) < \frac{1}{2}, 
    \end{align*}
    and  all $j=2,...,N$ that,
    \begin{align*}
    	(N-1) \cdot \Pr(\sum_{l=1}^{N'} \phi_l(\ou{V}_j) > \lambda N') < \frac{1}{2}. 
    \end{align*}
    Thus it should hold that, for all $i = 2,...,M$ and $j=2,...,N$,   
    \begin{align*}
    	\Pr(\sum_{k=1}^{M'} \psi_k(\ou{U}_i) > \lambda M' ) \leq 2^{-M'(\lambda\sqrt{k}\epsilon_1 - 1)} < \frac{1}{2 (M-1)}, 
    \end{align*}
    and 
    \begin{align*}
    	\Pr(\sum_{l=1}^{N'} \phi_l(\ou{V}_j) > \lambda N')\leq 2^{-N'(\lambda\sqrt{k}\epsilon_2 - 1)}  < \frac{1}{2(N-1)}. 
    \end{align*}	
    With the choices 
    \begin{align*}
    	M \leq 2^{(2^{k(R_1-\delta)}(\lambda\sqrt{k}\epsilon_1 - 1)  - 1)/2} 
    \end{align*} 
    and       
    \begin{align*}
    	N \leq 2^{(2^{k(R_2-\delta)}(\lambda\sqrt{k}\epsilon_2 - 1)  - 1)/2}, 
    \end{align*}    
    there exists realizations of $(\ou{U}_1, ...,\ou{U}_M)$, and $(\ou{V}_1,..., \ou{V}_N)$ such that with the family of maps $\mathcal{A}$ and $\mathcal{B}$ constructed from them, there is a simultaneous $(m,M,N)$-ID-code $\cosimcid$ that achieves the simultaneous ID-rate $(R_1,R_2)$ as $k\rightarrow\infty$. In the limit as $k\rightarrow \infty$,  $m = k$ and thus the $\cosimcid$ is a simultaneous $(k,M,N)$-ID-code.
\end{proof}

\begin{proof}[Proof of Theorem \ref{sim_ach}]
	The proof follows directly from the Transformator lemma.
\end{proof}

\begin{corollary}
	For a DM-CCQ channel generated by $W: \alx\cross\aly \rightarrow \mathcal{S}(\hi)$, 
	\begin{align*}
		C(W) \subseteq \cid(W). 
	\end{align*}
\end{corollary}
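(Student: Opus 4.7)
The plan is to derive the corollary as an immediate consequence of Theorem \ref{sim_ach} combined with the trivial containment recorded in Remark \ref{rem_id}. Theorem \ref{sim_ach} has just been established via the Transformator lemma and gives $C(W) \subseteq \simcid(W)$: every transmission-achievable rate pair is in fact achievable by a simultaneous identification code obtained by concatenating a rate-$(R_1,R_2)$ transmission code with a short maximal-error transmission code and a random choice of the maps $\mathcal{A}$, $\mathcal{B}$. Remark \ref{rem_id} then records that every simultaneous $(k,M,N)$-ID-code is, a fortiori, a $(k,M,N)$-ID-code (simultaneity is a structural restriction on the identifiers $I_{mn}$ requiring them to come from a single POVM via subset-sums, but it does not change the distributions $P_m, Q_n$, the error functionals $e_1, e_2$, or the cardinalities $M,N$), so $\simcid(W) \subseteq \cid(W)$.

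Chaining the two inclusions yields
\begin{align*}
C(W) \;\subseteq\; \simcid(W) \;\subseteq\; \cid(W),
\end{align*}
which is exactly the claim. There is no real obstacle here: once Theorem \ref{sim_ach} is in hand, the corollary is a one-line consequence, and its only content beyond the theorem is the observation that dropping the factorization requirement on the decoding operators can only enlarge the achievable region. The purpose of stating it separately is to emphasize that the full (not necessarily simultaneous) identification capacity region also contains the transmission capacity region, setting up the half of the main result (Theorem \ref{main_result}) that does not require the simultaneity constraint to be visible in the statement.
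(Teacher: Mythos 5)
Your proposal is correct and is exactly the paper's argument: chain Theorem \ref{sim_ach} ($C(W) \subseteq \simcid(W)$) with Remark \ref{rem_id} ($\simcid(W) \subseteq \cid(W)$). Nothing further is needed.
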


\begin{proof}
	For a DM-CCQ generated by $W$, by Theorem \ref{sim_ach}, $C(W) \subseteq \simcid(W)$  By Remark \ref{rem_id}, we have that $\simcid(W) \subseteq \cid(W)$. Combining these results, it holds that $C(W) \subseteq \cid(W)$.
\end{proof}

\section{Proof of Converse}

\begin{notation}\label{notational_thing}
	For probability distribution $P^k\in\mathcal{P}(\alx^k)$ and CQ channel $\mathbf{W}$, we write 
	\begin{align}
		P^kW^k\coloneqq\sum_{x^k \in \alx^k} P^k(x^k)W^k(x^k), 
	\end{align} 
	or with an additional distribution $Q^k\in\mathcal{P}(\aly^k)$, for a CCQ channel $\mathbf{W}$, 
	\begin{align}
		P^kQ^kW^k \coloneqq\sum_{\substack{x^k \in \alx^k \\ y^k \in \aly^k}} P^k(x^k)Q^k(y^k)W^k(x^k,y^k).
	\end{align}
	When dealing with classical-classical channels $\mathbf{W} \coloneqq \{W^k(y^k|x^k):x^k\in \alx^k, y^k\in\aly^k\}_{k\in\mathbb{N}}$, we write for $y^k \in \aly^k$ the output of the channel,
	\begin{align}
		P^kW^k(y^k)  \coloneqq \sum_{x^k \in \alx^k} P^k(x^k)W^k(y^k|x^k). 
	\end{align} 
\end{notation}

\begin{notation}
	In some cases, for notational simplicity, we refer to a random variable by its distribution. For example, for a random variable $A$ with distribution $p$, we may refer to $A$ by $p$.
\end{notation}

\begin{definition}\label{variational_dis}
	Let $\rho \in \mathcal{S}(\hi)$ and $D \coloneqq \{ D_i \}_{i\in [M]}$ a POVM for $M\in \mathbb{N}$. Then a probability distribution $\rho(D)$ is induced on $[M]$ such that $\rho(D)(i) = \tr(\rho D_i)$ for $i \in [M]$. For a second state $\sigma \in \mathcal{S}(\hi)$, we define, with $D$,
	\begin{align*}
		d_D(\rho, \sigma) \coloneqq d_1 (\rho(D), \sigma(D)), 
	\end{align*}
	with $d_1$ the total variational distance, that is, for a set $A$ and two distributions $p,q \in \mathcal{P}(A)$
	\begin{align*}
		d_1(p,q) \coloneqq \sum_{a\in A} |p(a) - q(a)| = 2 \sup_{A'\subseteq A}\{p(A') - q(A')\}. 
	\end{align*}
\end{definition}	

\begin{lemma}[Steinberg \cite{steinberg}, Lemma 6]\label{steinberg_saviour}
	With a classical-classical channel $\mathbf{W} \coloneqq \{W^k(y^k|x^k):x^k\in \alx^k, y^k\in\aly^k\}_{k\in\mathbb{N}}$, for a fixed $R>0$, $\rho >0$, $k_0\geq 1$, assume that for every $k>k_0$ there exists a collection of distributions $\mathbf{P} \coloneqq \{ P_i^k \}_{i=1}^{N} \subseteq \mathcal{P}(\alx^k)$ such that, 
	\begin{align}
		\frac{1}{k} \log \log N \geq R, 
	\end{align}
	and, 
	\begin{align}
		\min_{i\neq j} d_1(P_i^k W^k, P_j^k W^k) > 2(1-\rho). 
	\end{align}
	Then for every $\gamma < (\rho/4)\cdot \min(1, R)$, there exists a subset $\mathbf{\tilde{P}} \subseteq \mathbf{P}$ such that for every $k > k_0(m,R,\rho,\gamma)$ independent of $\mathbf{W}$ and $\mathbf{P}$ it holds,
	\begin{align}
		|\mathbf{\tilde{P}}| \geq \exp \exp (kR) - \exp\exp(k(R - \gamma)) 
	\end{align}
	and for every $\tilde{P}^k \in \mathbf{\tilde{P}}$,
	\begin{align}
		R(1 - 4\rho) \leq \frac{1}{k} I(\tilde{P}^k; \tilde{P}^kW^k). 
	\end{align}
\end{lemma}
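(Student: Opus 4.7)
The final statement is a converse-type lemma in the spirit of Han--Verdú channel resolvability \cite{resolve_theory}: it converts a collection of pairwise-separated output distributions into a nearly-as-large collection of input distributions with high mutual information with the channel output. My plan is to identify a ``bad'' index set on which the mutual information per symbol is less than $R(1-4\rho)$, bound its cardinality directly via resolvability and type counting, and let $\tilde{\mathbf{P}}$ be the complement.

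Concretely, I would define $\mathcal{B} \coloneqq \{i \in [N] : \tfrac{1}{k}I(P_i^k; P_i^k W^k) < R(1-4\rho)\}$ and take $\tilde{\mathbf{P}} \coloneqq \{P_i^k : i \notin \mathcal{B}\}$. The mutual-information lower bound is then immediate, and the cardinality claim $|\tilde{\mathbf{P}}| \geq \exp\exp(kR) - \exp\exp(k(R-\gamma))$ reduces to showing $|\mathcal{B}| \leq \exp\exp(k(R-\gamma))$. To each $i \in \mathcal{B}$ I attach an ``approximator'' as follows: by the Han--Verdú resolvability lemma applied at rate $R' \coloneqq R(1-4\rho) + \eta$ for a small $\eta > 0$ to be fixed below, there exists a distribution $\tilde P_i \in \mathcal{P}(\alx^k)$ with $|\mathrm{supp}(\tilde P_i)| \leq \lceil \exp(kR')\rceil$ satisfying
\begin{align*}
    d_1(\tilde P_i W^k,\, P_i^k W^k) \leq \epsilon_k,
\end{align*}
where $\epsilon_k \to 0$ uniformly in $\mathbf{W}$ and $\mathbf{P}$.

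Two observations then finish the argument. First, the assignment $i \mapsto \tilde P_i$ is injective on $\mathcal{B}$: if $\tilde P_i = \tilde P_j$ for distinct $i,j \in \mathcal{B}$, the triangle inequality gives $d_1(P_i^k W^k, P_j^k W^k) \leq 2\epsilon_k$, contradicting the hypothesis $d_1 > 2(1-\rho)$ as soon as $\epsilon_k < 1-\rho$. Second, a standard type-class quantisation bound shows that the number of distributions on $\alx^k$ whose support has size at most $\lceil \exp(kR')\rceil$, counted up to $\epsilon_k$-accuracy, is at most $\exp\exp(kR' + o(k))$. Choosing $\eta$ small and invoking the admissible slack $\gamma < (\rho/4)\min(1,R)$ makes $R' + o(1) \leq R - \gamma$, so $|\mathcal{B}| \leq \exp\exp(k(R-\gamma))$, as required.

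The main obstacle is purely the bookkeeping of constants: one must simultaneously keep the resolvability error $\epsilon_k$ below $1-\rho$ (so that injectivity of $i \mapsto \tilde P_i$ survives) and keep the approximator count below $\exp\exp(k(R-\gamma))$ (so that the bad set is properly controlled). The permissible range $\gamma < (\rho/4)\min(1,R)$ is precisely what allows both inequalities to be met uniformly in $\mathbf{W}$ and $\mathbf{P}$ for all sufficiently large $k$, yielding the threshold $k_0(m,R,\rho,\gamma)$ claimed in the statement.
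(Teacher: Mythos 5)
First, note that the paper does not actually prove this lemma: it is quoted as Lemma~6 of Steinberg \cite{steinberg} and used as a black box in the converse, so there is no internal proof to compare yours against. Judged on its own, your sketch has the right overall architecture --- define the bad set by low mutual information, attach a small-support approximator to each bad index, use the pairwise output separation for injectivity, and count approximators --- and this is indeed the skeleton of Steinberg's resolvability argument. The counting of approximators (via $M$-types, $|\alx^k|^M \le \exp\exp(kR' + \log(k\log|\alx|))$) and the injectivity-from-separation step are both fine.

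The gap is the central resolvability claim: that $\tfrac{1}{k}I(P_i^k;P_i^kW^k) \le R(1-4\rho)$ yields a $\tilde P_i$ of support size $\lceil \exp(kR')\rceil$, $R' = R(1-4\rho)+\eta$, with $d_1(\tilde P_i W^k, P_i^kW^k) \le \epsilon_k \to 0$ uniformly. Resolvability under total variation at a fixed blocklength is governed by the upper tail of the information density $i(x^k;y^k)=\log\bigl(W^k(y^k|x^k)/P^kW^k(y^k)\bigr)$, not by its mean; the Han--Verd\'u results that achieve the mutual-information rate are asymptotic statements for information-stable sources (or use normalized divergence rather than $d_1$), and they do not apply to the arbitrary, highly non-i.i.d.\ distributions $P_i^k$ occurring here. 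Concretely, let $W^k$ be the identity channel on $\{0,1\}^k$ and let $P^k$ put mass $1/2$ on one fixed string and spread the remaining $1/2$ uniformly; then $\tfrac1k I(P^k;P^kW^k)\approx \tfrac12$ bit, yet any $\tilde P$ supported on a set $S$ of at most $2^{0.6k}$ strings has $d_1(\tilde PW^k,P^kW^k) \ge 2P^k(S^c) \to 1$. So $\epsilon_k$ does not vanish. What the correct one-shot soft-covering lemma gives is $d_1 \le 2\Pr[\,i(X^k;Y^k) > \log M\,] + o(1)$, and bounding that tail by the mean (Markov, after controlling the negative part of $i$) yields an error of order $2(1-4\rho)R/(R-\gamma)$, which for small $\rho$ exceeds the threshold $1-\rho$ required by your triangle-inequality injectivity. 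Arranging the information-density threshold and the constants so that the approximation error stays below $1-\rho$ while the approximator count stays below $\exp\exp(k(R-\gamma))$ is precisely the technical content of Steinberg's Lemma~6, and it is the part your sketch waves through.
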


\begin{definition}\label{povm_thing}
	For $D \in \mathcal{L}(\hi)$ such that $0\leq D \leq \mathbbm{1}$, we define POVM $P(D) \coloneqq \{D, \mathbbm{1}-D\}$.
\end{definition}

\begin{lemma}\label{d1_ineq}
	Let $\rho, \sigma \in \mathcal{S}(\hi)$ and $E = (E_i)_{i\in [M]}$ be a POVM on Hilbert space $\hi$ with $M\in \mathbb{N}$ events. Let $A_1, A_2 \subset [M]$ such that $A_1\cap A_2 = \emptyset$ and $A_1 \cup A_2 = [M]$. Then, for $D\coloneqq\sum_{i \in A_1} E_i$,
	\begin{align*}
		d_1(\rho(E), \sigma(E)) \geq d_1(\rho(P(D)), \sigma(P(D))), 
	\end{align*}
	with $P(D) \coloneqq (D, \mathbbm{1}-D)$.
\end{lemma}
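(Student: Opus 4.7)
The plan is to recognize the statement as a data-processing inequality for total variation: the POVM $P(D) = (D, \mathbbm{1}-D)$ corresponds to a coarsening of the POVM $E$ obtained by merging the outcomes in $A_1$ and merging the outcomes in $A_2$, and total variation distance cannot increase under coarsening (equivalently, under pushforward by a deterministic map on the outcome alphabet).

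First I would unfold the definitions. By Definition~\ref{variational_dis}, $\rho(P(D))$ is a distribution on $\{1,2\}$ with
\begin{align*}
\rho(P(D))(1) = \tr(\rho D) = \sum_{i \in A_1} \tr(\rho E_i) = \sum_{i \in A_1} \rho(E)(i),
\end{align*}
and similarly $\rho(P(D))(2) = \sum_{i \in A_2} \rho(E)(i)$, using that $\{A_1, A_2\}$ partitions $[M]$. The analogous identities hold for $\sigma$. Thus $\rho(P(D))$ and $\sigma(P(D))$ are precisely the pushforwards of $\rho(E)$ and $\sigma(E)$ under the map $f: [M] \to \{1,2\}$ that sends $A_1 \mapsto 1$ and $A_2 \mapsto 2$.

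Next I would apply the triangle inequality termwise:
\begin{align*}
d_1(\rho(P(D)), \sigma(P(D))) &= \sum_{\ell \in \{1,2\}} \Bigl| \sum_{i \in A_\ell} \bigl(\rho(E)(i) - \sigma(E)(i)\bigr) \Bigr| \\
&\leq \sum_{\ell \in \{1,2\}} \sum_{i \in A_\ell} \bigl|\rho(E)(i) - \sigma(E)(i)\bigr| \\
&= \sum_{i \in [M]} \bigl|\rho(E)(i) - \sigma(E)(i)\bigr| \\
&= d_1(\rho(E), \sigma(E)),
\end{align*}
which is exactly the claimed bound.

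There is no real obstacle here; the whole content is that coarsening two disjoint blocks of outcomes and then taking $\ell_1$ distance cannot exceed the $\ell_1$ distance of the finer distributions, which is a direct triangle inequality. The only thing to be careful about is to use the hypothesis $A_1 \cap A_2 = \emptyset$ and $A_1 \cup A_2 = [M]$ to ensure that $\mathbbm{1} - D = \sum_{i \in A_2} E_i$ so that $P(D)$ is actually the pushforward POVM and not something else; without this hypothesis $P(D)$ would still be a valid POVM but the coarsening interpretation would fail.
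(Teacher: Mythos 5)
Your proof is correct and is essentially the paper's argument read in reverse: the paper splits $d_1(\rho(E),\sigma(E))$ over the partition $A_1, A_2$ and applies the triangle inequality to each block, which is exactly your coarsening/pushforward step. Both arguments also rely on the same observation that $\mathbbm{1}-D=\sum_{i\in A_2}E_i$ via the POVM completeness relation and the partition hypothesis.
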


\begin{proof}
	Let $\rho, \sigma, E, A_1$ and $A_2$ be defined as in the lemma statement. Then,
	\begin{align*}
		d_1&(\rho(E), \sigma(E))  \\ &= \sum_{m \in [M]}|\rho(E)(m) -\sigma(E)(m)|  \\ 
		&\begin{aligned}
		=\sum_{m \in A_1}&|\rho(E)(m) - \sigma(E)(m)|   \\ &+ \hspace{-2mm}\sum_{m \in A_2}|\rho(E)(m) - \sigma(E)(m)|
		\end{aligned}  \\
		&\begin{aligned} 
		\geq |\sum_{m \in A_1}& \rho(E)(m) - \sigma(E)(m)| \\ &+ |\sum_{m \in A_2}\rho(E)(m) - \sigma(E)(m)|
		\end{aligned}  \\
		& \begin{aligned}
		=|\sum_{m \in A_1}& \tr(\rho E_{m}) - \tr(\sigma E_{m}) | \\ &+ |\sum_{m \in A_2}\tr(\rho E_{m}) - \tr(\sigma E_{m}) |
		\end{aligned}   \\
		&\begin{aligned}
		= | \tr(\rho D)& - \tr(\sigma D) |  \\ &+  | \tr(\rho (\mathbbm{1}-D)) - \tr(\sigma(\mathbbm{1}-D)) |
		\end{aligned}\\
		  & \begin{aligned} = | & \rho(P(D))(1) -  \sigma(P(D))(1) | \\ &+  |  \rho(P(D))(2) -  \sigma(P(D))(2) | \end{aligned}\\
		&= d_1(\rho(P(D)), \sigma(P(D))) 
	\end{align*}
\end{proof}

\begin{definition} For parties $A, B$ and $C$ and a CCQ channel $\mathbf{W}$,
	\begin{align*}
		C'(\mathbf{W}) \coloneqq  \text{cl}\left( \liminf_{k\rightarrow\infty} C_k(\mathbf{W})\right) 
	\end{align*}
	with 
	\begin{align*}
		\begin{aligned}
		C_k(\mathbf{W}) &\coloneqq  \\
		\bigcup_{\substack{p_1 \in \mathcal{P}(\alx^k) \\ p_2 \in \mathcal{P}(\aly^k)}}&	\left\{ (R_1, R_2) \mid   R_1 \leq \frac{1}{k} I(A^k;C^k)_{\gamma^{k}_2(p_1,p_2)}, \right. \\ & \hspace{15mm} \left.
		R_2 \leq \frac{1}{k} I(B^k;C^k)_{\gamma^{k}_2(p_1,p_2)}  \right\},
		\end{aligned}
	\end{align*}	
\end{definition}

\begin{theorem}\label{converse_theorem}
	For CCQ channel $\mathbf{W}$, 
	\begin{align*}
		\simcid(\mathbf{W}) \subseteq C'(\mathbf{W}). 
	\end{align*}
\end{theorem}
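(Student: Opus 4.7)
The approach will follow the resolvability-based converse technique of L\"ober and Steinberg, adapted to two senders. The main ingredients are: the simultaneity of the code, which reduces the quantum output to a classical-classical channel via the POVM $(E_{ij})$; Lemma~\ref{d1_ineq}, which converts the identification errors into a total-variation separation of output distributions on the measurement outcome alphabet; Lemma~\ref{steinberg_saviour}, which turns a doubly-exponential family of pairwise separated distributions into a sub-family with provably large classical mutual information; and the Holevo (data-processing) bound, which lifts classical mutual information back to the quantum mutual information appearing in the definition of $C'(\mathbf{W})$.

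Starting with a simultaneous $(k,M,N)$-ID-code $\cosimcid = (P_m, Q_n, I_{mn})$ of rate pair $(R_1,R_2)$ with errors $e_1,e_2\leq\epsilon$, the simultaneity gives $I_{mn} = \sum_{i\in A_m}\sum_{j\in B_n} E_{ij}$. I would define the classical-classical channel $\tilde{W}^k(r,s\mid x^k,y^k) := \tr(E_{rs}W^k(x^k,y^k))$. Combining the error bounds with Lemma~\ref{d1_ineq} applied to $D = I_{mn}$ yields
\begin{align*}
d_1\bigl(P_m Q_n \tilde{W}^k,\ P_{m'}Q_{n'}\tilde{W}^k\bigr) \geq 2(1-2\epsilon)
\end{align*}
whenever $(m,n)\neq(m',n')$; specializing to $n=n'$ gives pairwise separation of $\{P_m Q_n \tilde{W}^k\}_m$ for each fixed $n$, and specializing to $m=m'$ does the analogous thing for the other sender.

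For each fixed $n_0\in[N]$, view $x^k \mapsto \sum_{y^k}Q_{n_0}(y^k)\tilde{W}^k(\cdot\mid x^k,y^k)$ as a classical-classical channel on first-sender inputs and apply Lemma~\ref{steinberg_saviour} (with rate parameter $R_1-\delta$ and any $\rho$ slightly larger than $2\epsilon$) to the family $\{P_m\}_{m=1}^M$. This produces a subset $\tilde{\mathbf{P}}_{n_0}\subseteq\{P_1,\ldots,P_M\}$ of size at least $2^{2^{k(R_1-\delta)}}-2^{2^{k(R_1-\delta-\gamma)}}$ such that every $\tilde{P}\in\tilde{\mathbf{P}}_{n_0}$ satisfies $\tfrac{1}{k}I(\tilde{P};\tilde{P}\cdot Q_{n_0}\tilde{W}^k)\geq (R_1-\delta)(1-4\rho)$. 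Symmetrically, for each fixed $m_0\in[M]$ Steinberg produces $\tilde{\mathbf{Q}}_{m_0}\subseteq\{Q_1,\ldots,Q_N\}$ of size at least $2^{2^{k(R_2-\delta)}}-2^{2^{k(R_2-\delta-\gamma)}}$ whose elements each achieve classical mutual information at least $(R_2-\delta)(1-4\rho)/k$ through the channel $y^k\mapsto\sum_{x^k}P_{m_0}(x^k)\tilde{W}^k(\cdot\mid x^k,y^k)$.

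Finally, I would pick a single pair of distributions witnessing both bounds simultaneously by a counting argument. The number of pairs $(m,n)\in[M]\times[N]$ with $P_m\notin\tilde{\mathbf{P}}_n$ or $Q_n\notin\tilde{\mathbf{Q}}_m$ is bounded by $N\cdot 2^{2^{k(R_1-\delta-\gamma)}}+M\cdot 2^{2^{k(R_2-\delta-\gamma)}}$, which is doubly-exponentially smaller than $MN$, so a ``good'' pair $(m^*,n^*)$ must exist. Setting $p_1 := P_{m^*}$ and $p_2 := Q_{n^*}$ and applying data processing for quantum mutual information under the measurement $(E_{ij})$ on the $C$-register of $\gamma_2^k(p_1,p_2)$ yields
\begin{align*}
\tfrac{1}{k}I(A^k;C^k)_{\gamma_2^k(p_1,p_2)} &\geq (R_1-\delta)(1-4\rho),\\
\tfrac{1}{k}I(B^k;C^k)_{\gamma_2^k(p_1,p_2)} &\geq (R_2-\delta)(1-4\rho),
\end{align*}
so $((R_1-\delta)(1-4\rho),(R_2-\delta)(1-4\rho))\in C_k(\mathbf{W})$ for all sufficiently large $k$. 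Letting $\rho,\delta\downarrow 0$ and invoking closure gives $(R_1,R_2)\in C'(\mathbf{W})$. The main obstacle I expect is the joint compatibility step: Steinberg's lemma controls each sender's mutual information only relative to a fixed choice for the other sender, so the single pair $(p_1,p_2)$ simultaneously witnessing both bounds must be produced by the counting argument, and one has to verify that the doubly-exponential surplus in Steinberg's lower bound really dominates the corresponding deficit from the other sender regardless of whether $R_1$ or $R_2$ is larger.
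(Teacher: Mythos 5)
Your proposal is correct and follows the same overall route as the paper: use simultaneity to reduce to a classical channel via the measurement, convert the two identification error bounds into pairwise total-variation separation via Lemma~\ref{d1_ineq}, invoke Steinberg's Lemma~\ref{steinberg_saviour} to extract distributions with large classical mutual information, and lift back to quantum mutual information via the Holevo bound. The one genuinely different step is how you resolve the joint-compatibility problem. The paper proceeds sequentially: it first shows $\bigcap_{j=1}^{N}\tilde{\mathbf{P}}_j\neq\emptyset$ by a union bound over the ``bad'' sets $\mathcal{Z}_j^k$, which requires $N\cdot\exp\exp(k(R_1-\delta-\gamma))<M$ and hence forces the normalization $R_2=R_1-\mu$ with $\gamma<\mu$; only then, with a single $\tilde{P}_{i^*}^k$ fixed, does it run Steinberg once more for the second sender through the channel $V^k_{\tilde{P}}$. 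You instead run Steinberg for every fixed $m_0$ and every fixed $n_0$ and do a symmetric union bound over \emph{pairs} $(m,n)$; since $N\cdot\exp\exp(k(R_1-\delta-\gamma))+M\cdot\exp\exp(k(R_2-\delta-\gamma))\ll MN$ whenever $\gamma>0$ and both rates are positive, a good pair $(m^*,n^*)$ exists with no assumption on the relative sizes of $R_1$ and $R_2$. Your version is slightly more robust (the paper's WLOG is delicate when $R_1=R_2$) at the cost of invoking Steinberg's lemma $M+N$ times rather than twice, which is harmless since the lemma is deterministic. Both selections yield the same witness: the two mutual-information bounds are each evaluated at the channel state $\gamma_2^k(P_{m^*},Q_{n^*})$, exactly as the definition of $C_k(\mathbf{W})$ requires.
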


\begin{proof}
	Given a CCQ channel $\mathbf{W} \coloneqq \{ W^k: \alx^k \times \aly^k \rightarrow \st(\hi_C^{\otimes k}) \}$, let $(R_1, R_2) \in \simcid(\mathbf{W})$, then for all $\lambda_1, \lambda_2, \delta > 0$ there exists a $k_0$ such that for all $k \geq k_0$, there is a simultaneous $(k,M,N)-$ID-code $\cosimcid$  with 
	\begin{align}\label{code_thing}
		\begin{aligned}
		\frac{1}{k} \log \log M \geq R_1 - \delta, & \hspace{2mm} \frac{1}{k} \log \log N \geq R_2 - \delta, \\ \hspace{2mm} e_1(\cosimcid, W^k) \leq \lambda_1,& \hspace{2mm} e_2(\cosimcid, W^k) \leq \lambda_2,
		\end{aligned}
	\end{align} 
	and specifically there are codes with  $\lambda_1+\lambda_2<1$. Without loss of generality, assume for $\mu> 0$, 
	\begin{align*}
		R_2 = R_1 - \mu. 
	\end{align*} 
	Let $\lambda_1 + \lambda_2 < 1$ and define $\epsilon \coloneqq 1 - \lambda_1 - \lambda_2$. For $k\geq k_0$, let $\cosimcid \coloneqq (P_i^k, Q_j^k, I^k_{ij}) _{i=1,j=1}^{M,N}$ be the simultaneous $(k, M ,N)$-ID-code, where with $\delta > 0$,
	\begin{align}\label{restrions}
		M \coloneqq  \ciel{2^{2^{(k(R_1 -\delta))}}} \hspace{2mm} \text{ and }\hspace{2mm} N \coloneqq \ciel{ 2^{2^{(k(R_2 -\delta))}}}, 
	\end{align} 
	with errors as in (\ref{code_thing}). Define POVM $E^k$ indexed by $z^k\in\mathcal{Z}^k$ as the the common refinement of $\{ I^k_{ij}  \}_{i=1,j=1}^{M,N}$. For a fixed $j \in [N]$ define the channel, 
	\begin{align*}
		W^k_j: \alx^k \rightarrow \st(\hi_C^{\otimes k}), \hspace{2mm} x^k \mapsto \sum_{y^k \in \aly^k} Q_j^k(y^k) W^k(x^k, y^k). 
	\end{align*}
	Then it is easy to see that with $W^k_j$, $( P^k_i, I^k_{ij} )_{i=1}^{M}$ is an simultaneous $(k,M)$-ID-code, since rate and error bounds hold. Further, for all $1 \leq a < b \leq M$,
	\begin{align*}
		d_{E^k}(P_a^kW_j^k & , P_b^kW_j^k) \\ &= d_1(P_a^k W_j^k(E^k), P_b^kW_j^k(E^k)) \\
		&\geq d_1(P_a^kW_j^k(P(I^k_{aj})), P_b^kW_j^k(P(I^k_{aj})))\\
		&\geq 2\left(\tr(P_a^kW_j^kI^k_{aj}) - \tr(P_b^kW_j^kI^k_{aj})\right)\\
		&= 2\left(\tr(P_a^kQ_j^kW^kI_{aj}^k) - \tr(P_b^kQ_j^kW^kI_{aj}^k)\right)\\
		&> 2\left(1 - e_1(\cosimcid, W^k) - e_2(\cosimcid, W^k)\right) \\
		&\geq 2(1 - \lambda_1 - \lambda_2) \\
		&= 2\epsilon.
	\end{align*}
	The first inequality holds by Lemma \ref{d1_ineq}. The second inequality holds by using Definition \ref{variational_dis} with a fixed subset for the variational distance. The second equality comes from simply replacing $W_j^k$ with its definition. The third inequality holds by definition of the error types. The fourth inequality holds by the error bound restrictions imposed on $\cosimcid$. With this, we see if we construct the classical channel 
	\begin{align*}
		\tilde{W}_j^k(z^k|x^k) = \tr(W_j^k(x^k) E_{z^k}^k) 
	\end{align*}
	for some finite output alphabet $\mathcal{Z}^k$ and $z^k \in \mathcal{Z}^k$, then for $1 \leq a \leq M$,
	\begin{align*}
		P_a^k \tilde{W}_j^k(z^k) & = \sum_{x^k \in \alx^k} P_a^k(x^k) \tilde{W}_j^k(z^k|x^k)     \\
		                         & = \sum_{x^k \in \alx^k} P_a^k(x^k) \tr(W_j^k(x^k) E_{z^k}^k)  \\ 
		                         & =  \tr(\sum_{x^k \in \alx^k} P_a^k(x^k) W_j^k(x^k) E_{z^k}^k) \\ 
		                         & = P_a^k W_j^k (E^k)(z^k),                                     
	\end{align*}
	where the first equality is by notational choice in Notation \ref{notational_thing}, the second by replacing $\tilde{W}_j^k(z^k|x^k)$ with its definition, the third by the linearity of the trace and the last again by notational choice. Then it holds for all $1 \leq a < b \leq M$, 
	\begin{align*}
		d_1(P_a^k \tilde{W}_j^k, P_b \tilde{W}_j^k) =d_1(P_a^k W_j^k(E^k), P_b^kW_j^k(E^k)) > 2\epsilon, 
	\end{align*}
	where the inequality holds from above. Letting $\rho \coloneqq 1 - \epsilon$ and choosing $\gamma < \min\left (\mu, \rho/4, \rho/ 4 \cdot (R_1 - \delta)\right )$, by Lemma \ref{steinberg_saviour}, there exists a subset $\mathbf{\tilde{P}}_j \subset (P_i^k)_{i=1}^{M}$ such that 
	\begin{align*}
		|\mathbf{\tilde{P}}_j| \geq \exp \exp (k(R_1 - \delta)) - \exp\exp(k(R_1 - \delta - \gamma)) 
	\end{align*}
	and for all $\tilde{P}_{i^*}^k \in \mathbf{\tilde{P}}_j$,
	\begin{align*}
		(R_1 - \delta)(1 - 4\rho) \leq \frac{1}{k} I(\tilde{P}^k; \tilde{P}^k W_j^k(E^k)), 
	\end{align*}
	for all $k$ sufficiently large, depending only on $k_0, R_1 - \delta, \gamma,$ and $\rho$. We show that 
	\begin{align*}
		\bigcap_{j=1}^{N} \mathbf{\tilde{P}}_j \neq \emptyset, 
	\end{align*}
	which implies that regardless of $j$, we can always find such a $\tilde{P}_{i^*}^k$. In reference to the proof of Lemma 6 in \cite{steinberg}, a set $\mathcal{Z}^k_j$, for a fixed $j$, is defined as, 
	\begin{align*}
		\mathcal{Z}^k_j \coloneqq \left\{ P^k \in (P_i^k)_{i=1}^{M}  \mid \frac{1}{k}I(P^k; P^kW_j^k(E^k)) < R_1 -\delta \right\}, 
	\end{align*}
	and it is shown that $|\mathcal{Z}^k_j| \leq \exp \exp(k(R_1-\delta -\gamma))$ and $\mathbf{\tilde{P}}_j \coloneqq (P_i^k)_{i=1}^{M} \setminus \mathcal{Z}^k_j$. It clear then that
	\begin{align*}
		\left|\bigcup_{j=1}^{N} \mathcal{Z}^k_j\right| \leq \sum_{j=1}^{N} |\mathcal{Z}^k_j| \leq {N}\exp \exp(k(R_1-\delta -\gamma)). 
	\end{align*}
	Therefore, it holds that,
	\begin{align*}
		\left| \bigcap_{j=1}^{N} \mathbf{\tilde{P}}_j \right| &= M - \left|\bigcup_{j=1}^{N} \mathcal{Z}^k_j\right| \\
		&\geq M - {N}\exp \exp(k(R_1-\delta -\gamma)) \\
		  & \begin{aligned}\geq & \ciel{\exp\exp(k (R_1 - \delta))} - \\ & \hspace{2mm} \ciel{\exp\exp(k (R_2 - \delta))}\cdot \\ &\hspace{2mm}\exp \exp(k(R_1-\delta -\gamma)) 
		\end{aligned}\\
		& \begin{aligned}
		= &\ciel{\exp\exp(k (R_1 - \delta))} - \\ 
		&\hspace{2mm} \ciel{\exp\exp(k (R_1 - \delta - \mu))}\cdot \\ 
		&\hspace{2mm}\exp \exp(k(R_1-\delta -\gamma))
		\end{aligned} \\
		&> 0,
	\end{align*}
	for $k$ large enough, where the second inequality is by how $M$ and $N$ are defined, and the second equality is by the assumption that $R_2 = R_1 - \mu$. Therefore, there exists at least one distribution $\tilde{P}_{i^*}^k$ for all $j \in [N]$ with 
	\begin{align*}
		(R_1 - \delta)(1 - 4\rho) \leq \frac{1}{k} I(\tilde{P}_{i^*}^k; \tilde{P}_{i^*}^k W_j^k(E^k)). 
	\end{align*}
	With such a $\tilde{P}_{i^*}^k$, construct the channel,
	\begin{align*}
		V^k_{\tilde{P}}: \aly^k \rightarrow \st(\hi^{\otimes k}_C), \hspace{2mm} y^k \mapsto \sum_{x^k \in \alx^k} \tilde{P}_{i^*}^k(x^k) W^k(x^k, y^k). 
	\end{align*}
	Repeating the same argumentation, we can conclude that there is a subset $\mathbf{\tilde{Q}} \subset (Q_j^k)_{j=1}^{N}$ with 
	\begin{align*}
		|\mathbf{\tilde{Q}}| \geq \exp \exp (k(R_2 - \delta)) - \exp\exp(k(R_2 - \delta - \gamma)) > 0 
	\end{align*}
	and for any element $\tilde{Q}^k_{j^*} \in \mathbf{\tilde{Q}}$, where $j^*$ indicates the index of such a distribution, 
	\begin{align*}
		(R_2 - \delta)(1 - 4\rho) \leq \frac{1}{k} I(\tilde{Q}_{j^*}^k; \tilde{Q}_{j^*}^k V^k_{\tilde{P}}(E^k)). 
	\end{align*}
	Now, with channel state $ \alpha \in \st(\hi_A^{\otimes k}\otimes \hi_C^{\otimes k})$,
	\begin{align*}
		\alpha & \coloneqq \sum_{x^k \in \alx^k} \tilde{P}_{i^*}^k(x^k)\ketbra{x^k} \otimes   W_{j^*}^k(x^k) 
	\end{align*}
	it holds,
	\begin{align*}
		I(A^k, C^k)_{\alpha} & = \chi(\tilde{P}_{i^*}^k, W^k_{j^*})                                                             \\
		                     & \geq I_{\text{acc}}(\tilde{P}_{i^*}^k, W^k_{j^*})                                                \\
		                     & = \max_{\text{POVM } \tilde{E}^k} I(\tilde{P}_{i^*}^k,  \tilde{P}_{i^*}^kW^k_{j^*}(\tilde{E}^k)) \\	
		                     & \geq  I(\tilde{P}_{i^*}^k, \tilde{P}_{i^*}^kW_{j^*}^k(E^k))                                      
	\end{align*}	
	where $I_{\text{acc}}$ is the accessible information defined in \cite[(10.179)]{wilde_text}. The first equality holds since $\alpha$ is a channel state. The first inequality is the Holveo bound. The second equality is the definition of accessible information. Now, defining channel state $\beta \in \st(\hi_B^{\otimes k}\otimes \hi_C^{\otimes k})$
	\begin{align*}
		\beta & \coloneqq \sum_{y^k \in \aly^k} \tilde{Q}_{j^*}^k(y^k)\ketbra{y^k} \otimes V^k_{\tilde{P}}(y^k), 
	\end{align*}
	by the same reasoning, 
	\begin{align*}
		I(B^k, C^k)_\beta \geq I(\tilde{Q}_{j^*}^k,  \tilde{Q}_{j^*}^k V^k_{\tilde{P}}(E^k)). 
	\end{align*}	
	It is clear that that for the channel state $\gamma_2^{k} \in \st(\hi_A^{\otimes k} \otimes \hi_B^{\otimes k}\otimes \hi_C^{\otimes k})$, 
	\begin{align*}
		&\gamma_2^{k} \coloneqq \gamma_2^{k}(\tilde{P}_{i^*}^k,  \tilde{Q}_{j^*}^k) \\ 
		  & \begin{aligned} & =  \sum_{\substack{x^k \in \alx^k \\ y^k \in \aly^k}}\tilde{P}_{i^*}^k(x^k) \ketbra{x^k} \otimes \tilde{Q}_{j^*}^k(y^k) \ketbra{y^k} \\[1pt] & \hspace{4cm} \otimes W^k(x^k, y^k) 
		\end{aligned},
	\end{align*}
	it holds, 
	\begin{align*}
		I(A^k, C^k)_\alpha = I(A^k, C^k)_{\gamma_2^{k}}, 
	\end{align*}
	and
	\begin{align*}
		I(B^k, C^k)_\beta = I(B^k, C^k)_{\gamma_2^{k}}. 
	\end{align*}	
	Therefore, 
	\begin{align*}
		(R_1 - \delta)(1 - 4\rho) \leq \frac{1}{k} I(A^k; C^k)_{\gamma^k_2} 
	\end{align*}
	and
	\begin{align*}
		(R_2 - \delta)(1 - 4\rho) & \leq  \frac{1}{k} I(B^k, C^k)_{\gamma^k_2}. 
	\end{align*}
	Since $\delta$ and $\rho$ can be arbitrarily small, it implies that, 
	\begin{align*}
		(R_1, R_2) \in C'(\mathbf{W}). 
	\end{align*}
\end{proof}

\section{Equality of Capacity Regions}
In this section we show that the multi-letter capacity region $C'(W)$ for a DM-CCQ multiple access channel $W$ is indeed equal to the single letter  $C(W)$, concluding that simultaneous identification capacity for $W$ is equal to its transmission capacity. 

\begin{definition}
	For a CCQ channel $\mathbf{W}$ and parties $A,B$ and $C$, for $k\in\mathbb{N}$,
	\begin{align*}
		\begin{aligned}
		R_k(\mathbf{W})  \coloneqq \\ &  
		\begin{aligned}
		\hspace{-10mm}\bigcup_{\substack{p_1 \in \mathcal{P}(\alx^k) \\ p_2\in \mathcal{P}(\aly^k) }  }	\left\{ (R_1, R_2) \mid R_1 \leq \frac{1}{k} I(A^k ; C^k | B^k)_{\gamma^k_2(p_1,p_2)}, \right. \\ \left. R_2 \leq \frac{1}{k} I(B^k;C^k|A^k)_{\gamma^k_2(p_1,p_2)}, \right. \\ \left. \hspace{1mm} R_1+R_2 \leq \frac{1}{k}I(A^k,B^k;C^k)_{\gamma^k_2(p_1,p_2)} \right\}
		\end{aligned} 
		\end{aligned}
	\end{align*}
\end{definition}

\begin{lemma}\label{c_sub_r}		
	For a CCQ channel $\mathbf{W}$, for all $k \in \mathbb{N}$,
	\begin{align*}
		C_k(\mathbf{W}) \subseteq R_k(\mathbf{W}). 
	\end{align*}
\end{lemma}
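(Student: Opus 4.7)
The plan is to unpack both $C_k(\mathbf{W})$ and $R_k(\mathbf{W})$ at the same channel state $\gamma^k_2(p_1,p_2)$ and exploit the fact that, by Definition \ref{gamma}, the $A^k$ and $B^k$ systems of this state are classically independent, i.e.\ the reduced state on $A^kB^k$ is $\sum_{x^k,y^k} p_1(x^k)p_2(y^k)\ketbra{x^k}\otimes\ketbra{y^k}$, so $I(A^k;B^k)_{\gamma^k_2(p_1,p_2)}=0$.

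Concretely, I would take $(R_1,R_2)\in C_k(\mathbf{W})$, fix witnesses $p_1\in\mathcal{P}(\alx^k)$, $p_2\in\mathcal{P}(\aly^k)$ such that $R_1\leq \tfrac{1}{k}I(A^k;C^k)_{\gamma^k_2(p_1,p_2)}$ and $R_2\leq \tfrac{1}{k}I(B^k;C^k)_{\gamma^k_2(p_1,p_2)}$, and show that the same $p_1,p_2$ witness membership in $R_k(\mathbf{W})$. For the first marginal bound I would apply the quantum chain rule $I(A^k;B^kC^k)=I(A^k;B^k)+I(A^k;C^k\vert B^k)$, which together with $I(A^k;B^k)=0$ and the data-processing inequality $I(A^k;B^kC^k)\geq I(A^k;C^k)$ gives $I(A^k;C^k\vert B^k)\geq I(A^k;C^k)$; hence $R_1\leq \tfrac{1}{k}I(A^k;C^k\vert B^k)_{\gamma^k_2(p_1,p_2)}$. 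The symmetric argument with $A$ and $B$ swapped yields $R_2\leq \tfrac{1}{k}I(B^k;C^k\vert A^k)_{\gamma^k_2(p_1,p_2)}$.

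For the sum-rate constraint I would use the chain rule once more in the form
\begin{align*}
I(A^kB^k;C^k)=I(A^k;C^k)+I(B^k;C^k\vert A^k),
\end{align*}
and then invoke the just-established bound $I(B^k;C^k\vert A^k)\geq I(B^k;C^k)$ to conclude
\begin{align*}
R_1+R_2\leq \tfrac{1}{k}\bigl(I(A^k;C^k)+I(B^k;C^k)\bigr)\leq \tfrac{1}{k}I(A^kB^k;C^k)_{\gamma^k_2(p_1,p_2)}.
\end{align*}
The three required inequalities then place $(R_1,R_2)$ in $R_k(\mathbf{W})$.

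There is no real obstacle: the whole argument rests on standard quantum chain rules for mutual information applied to a classical-quantum state whose two classical registers are a product distribution. The only point that deserves a sentence of justification in the writeup is why the chain rule and data-processing inequality apply here — which is immediate because $\gamma^k_2(p_1,p_2)$ is a well-defined tripartite quantum state, so the usual Pinsker/strong-subadditivity-based identities for $I(\cdot;\cdot)$ and $I(\cdot;\cdot\vert\cdot)$ are available without modification.
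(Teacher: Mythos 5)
Your proof is correct and follows essentially the same route as the paper: both arguments rest on the independence of the two classical input registers ($I(A^k;B^k)=0$) together with non-negativity of conditional mutual information (equivalently, your data-processing step), applied through the two orderings of the chain rule for $I(A^k;B^kC^k)$ and the decomposition $I(A^kB^k;C^k)=I(A^k;C^k)+I(B^k;C^k\vert A^k)$. The paper merely writes out these chain-rule identities as explicit entropy expansions, so your version is a more compact presentation of the same argument.
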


\begin{proof}
	We show for any two of random variables $A$ and $B$ such that $A$ and $B$ are mutually independent, $C$ the channel output, and $\gamma_2$ any channel state, that it holds,
	\begin{align*}
		I(A;C)_{\gamma_2} & \leq I(A;C|B)_{\gamma_2}, \\ I(B;C)_{\gamma_2} &\leq I(B;C|A)_{\gamma_2},\\
		I(A,C)_{\gamma_2} + I(B,C)_{\gamma_2} &\leq  I(A,B;C)_{\gamma_2}.
	\end{align*}
	With $A$ and $B$ independent, $I(A;B)_{\gamma_2} = 0$. Moreover, $I(A;B|C)_{\gamma_2} \geq 0$. So 
	\begin{align*}
		I(A & ;C|B)_{\gamma_2} \\ &\geq I(A;C|B)_{\gamma_2} + I(A;B)_{\gamma_2} - I(A;B|C)_{\gamma_2} \\
		&\begin{aligned}
		&\hspace{-0.5mm} = H(A|B)_{\gamma_2} + H(C|B)_{\gamma_2} - H(A,C|B)_{\gamma_2} \\ 
		& \hspace{2mm} + H(A)_{\gamma_2} +H(B)_{\gamma_2}-H(A,B)_{\gamma_2}\\
		&\hspace{2mm} - H(A|C)_{\gamma_2}-H(B|C)_{\gamma_2}+H(A,B|C)_{\gamma_2}
		\end{aligned} \\
		&\begin{aligned}
		&= H(A,B)_{\gamma_2}-H(B)_{\gamma_2} + H(B,C)_{\gamma_2} \\
		& \hspace{2mm} - H(B)_{\gamma_2} - H(A,B,C)_{\gamma_2}+H(B)_{\gamma_2} \\ 
		& \hspace{2mm} - H(A)_{\gamma_2} +H(B)_{\gamma_2}-H(A,B)_{\gamma_2} \\ 
		& \hspace{2mm} - H(A,C)_{\gamma_2} + H(C)_{\gamma_2} \\ 
		& \hspace{2mm} - H(B,C)_{\gamma_2} + H(C)_{\gamma_2} + H(A,B,C)_{\gamma_2} \\
		& \hspace{2mm} - H(C)_{\gamma_2}
		\end{aligned} \\
		&= H(A)_{\gamma_2} + H(C)_{\gamma_2} - H(A,C)_{\gamma_2} \\
		&= I(A;C)_{\gamma_2},
	\end{align*}
	where all equalities hold simply by definition of entropy and mutual information. It can be  shown in a similar way that similarly $I(B;C|A)_{\gamma_2} = I(B;C)_{\gamma_2}$. Further,  
	\begin{align*}
		I & (A;C)_{\gamma_2} + I(B;C)_{\gamma_2}                       \\ 
		  & = H(A) - H(A|C)_{\gamma_2} + H(B) - H(B|C)_{\gamma_2}      \\
		  & = H(A,B)_{\gamma_2} - H(A|C)_{\gamma_2}- H(B|C)_{\gamma_2} \\
		  & \begin{aligned}                                            
		  & = H(A,B)_{\gamma_2} - H(A,B|C)_{\gamma_2}                  \\ 
		  & \hspace{5mm}+ H(B|A,C)_{\gamma_2} - H(B|C)_{\gamma_2}      
		\end{aligned} \\
		  & = I(A,B;C)_{\gamma_2} - I(A;B|C)_{\gamma_2}                \\
		  & \leq I(A,B;C)_{\gamma_2}.                                  
	\end{align*}
	Where the first equality is by definition of quantum mutual information and also the fact that for classical variables the Shannon entropy is equal to the von Neumann entropy. The second equality follows by the independence of $A$ and $B$ and again that the Shannon and von Neumann entropies are equal for classical variables. The third equality follows because, with some manipulation $ H(A|C)_{\gamma_2} = H(A,B|C)_{\gamma_2} - H(B|A,C)_{\gamma_2}$, seen as follows. By definition,
	\begin{align*}
		H(A,B|C)_{\gamma_2} & = H(A,B,C)_{\gamma_2} - H(C) 
	\end{align*}
	and,
	\begin{align*}
		H(B|A,C)_{\gamma_2} & = H(A,B,C)_{\gamma_2} - H(AC), 
	\end{align*}
	and so,
	\begin{align*}
		H(A&,B|C)_{\gamma_2} - H(B|A,C)_{\gamma_2}\\ 
		& 
		\begin{aligned}
		  & =H(A,B,C)_{\gamma_2} - H(C)_{\gamma_2} \\ &\hspace{3mm}- H(A,B,C)_{\gamma_2} + H(AC)_{\gamma_2} 
		\end{aligned} \\ 
		&= H(AC)_{\gamma_2}-H(C)_{\gamma_2} \\ 
		&= H(A|C)_{\gamma_2}.
	\end{align*}
	The fourth equality, holds because  $ I(A;B|C)_{\gamma_2} = H(B|C)_{\gamma_2}-H(B|A,C)_{\gamma_2}$. To see this we just need that $H(B|A,C)_{\gamma_2} = H(A,B|C)_{\gamma_2}-H(A|C)_{\gamma_2}$ and the rest  follows by definition. Explicitly,
	\begin{align*}
		H(A|C)_{\gamma_2}   & = H(AC)_{\gamma_2} - H(C)_{\gamma_2}     \\
		H(A,B|C)_{\gamma_2} & = H(A,B,C)_{\gamma_2} - H(C)_{\gamma_2}, 
	\end{align*}
	so,
	\begin{align*}
		H(A & ,B|C)_{\gamma_2} -H(A|C)_{\gamma_2}                                          \\ 
		    & = H(A,B,C)_{\gamma_2} - H(C)_{\gamma_2} - H(AC)_{\gamma_2} + H(C)_{\gamma_2} \\ 
		    & = H(A,B,C)_{\gamma_2} - H(AC)_{\gamma_2}                                     \\
		    & = H(B|A,C)_{\gamma_2}.                                                       
	\end{align*}
	The inequality follows from positivity of mutual information. Since all random variables are chosen arbitrarily and this holds for any channel state, the statement holds for all $k$ as was to show.
\end{proof}

\begin{lemma}\label{transmis_sub_r}
	For a CCQ channel $\mathbf{W}$,
	\begin{align*}
		C(\mathbf{W}) \subseteq \text{cl}\left(\liminf_{k\rightarrow\infty} R_k(\mathbf{W}) \right) 
	\end{align*}
\end{lemma}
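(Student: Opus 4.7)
My plan is to run the standard converse for a MAC with a classical--quantum output: turn a good transmission code into input distributions for the channel state $\gamma_2^k$, then combine Fano's inequality with the Holevo/data-processing bound to recover the three mutual-information inequalities that define $R_k(\mathbf{W})$.

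Fix $(R_1, R_2) \in C(\mathbf{W})$. For each $\epsilon, \delta > 0$ and $k$ sufficiently large there is, by definition of $C(\mathbf{W})$, a $(k, M, N)$-code $\co = (x_m, y_n, D_{mn})$ with $\tfrac{1}{k}\log M \geq R_1 - \delta$, $\tfrac{1}{k}\log N \geq R_2 - \delta$, and $\overline{e}(\co, W^k) \leq \epsilon$. I would assume without loss of generality that the $x_m$ are pairwise distinct and likewise the $y_n$ (duplicated codewords contribute only $o(1)$ to the rate). Let $p_1^k \in \mathcal{P}(\alx^k)$ and $p_2^k \in \mathcal{P}(\aly^k)$ be the uniform distributions on $\{x_m\}_{m=1}^{M}$ and $\{y_n\}_{n=1}^{N}$ respectively, and form the channel state $\gamma \coloneqq \gamma_2^k(p_1^k, p_2^k)$. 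Under $\gamma$, the classical registers $A^k$ and $B^k$ are in bijection with independent uniform message variables $M$ and $N$, and the $C^k$ register carries the channel output.

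Next I would introduce the decoder estimates $(\hat M, \hat N)$ obtained by measuring $\{D_{mn}\}$ on $C^k$. The average-error hypothesis gives $\Pr\{(M, N) \neq (\hat M, \hat N)\} \leq \epsilon$, and in particular $\Pr\{M \neq \hat M\} \leq \epsilon$ and $\Pr\{N \neq \hat N\} \leq \epsilon$. Fano's inequality then yields
\begin{align*}
H(M | \hat M) \leq 1 + \epsilon \log M, \quad H(N | \hat N) \leq 1 + \epsilon \log N,
\end{align*}
and $H(MN | \hat M \hat N) \leq 1 + \epsilon \log(MN)$. Combining $H(M|N) = \log M$ (independence), the chain identity $\log M = I(M; \hat M \hat N | N) + H(M | \hat M \hat N N)$, the monotonicity $H(M|\hat M \hat N N)\leq H(M|\hat M)$, and the classical--quantum data-processing bound $I(M; \hat M \hat N | N) \leq I(A^k; C^k | B^k)_\gamma$ (obtained by fixing $N=n$, applying the Holevo bound to the ensemble $\{(1/M, W^k(x_m, y_n))\}_m$, and averaging over $n$), I get
\begin{align*}
(1 - \epsilon) \log M \leq I(A^k; C^k | B^k)_\gamma + 1.
\end{align*}
The analogous argument with $M$ and $N$ swapped gives the $I(B^k; C^k | A^k)_\gamma$ bound, and the unconditional version applied to the joint message $(M,N)$ together with $I(MN;\hat M \hat N) \leq I(A^k B^k; C^k)_\gamma$ gives $(1-\epsilon)\log(MN) \leq I(A^k B^k; C^k)_\gamma + 1$.

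Dividing the three inequalities by $k(1-\epsilon)$ and inserting the rate lower bounds on $\tfrac{1}{k}\log M$ and $\tfrac{1}{k}\log N$ exhibits a point of $R_k(\mathbf{W})$ of the form $(R_1 - \eta_k, R_2 - \eta_k)$ with $\eta_k \to 0$ as $\epsilon, \delta \to 0$ and $k \to \infty$; passing to $\text{cl}(\liminf_{k\to\infty} R_k(\mathbf{W}))$ therefore absorbs $(R_1, R_2)$. The main technical subtlety I expect is the conditional data-processing step: one must verify that $I(A^k; C^k | B^k)_\gamma$, computed on a state whose $A^k$ and $B^k$ registers are classical and independent, is exactly the average over $n$ of the Holevo informations of the $n$-slice ensembles $\{(1/M, W^k(x_m, y_n))\}_m$, and that the classical mutual information induced by the joint POVM $\{D_{mn}\}$ is dominated by this quantity. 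Once that identification is set up cleanly, the rest is routine Fano bookkeeping and a limit.
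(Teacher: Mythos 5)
Your proposal is correct and follows essentially the same route as the paper: uniform distributions on the codewords to build the channel state, Fano's inequality, the Holevo/data-processing bound, and independence of the two senders to reach the three defining inequalities of $R_k(\mathbf{W})$ up to $O(\epsilon)+O(1/k)$ slack. The only cosmetic difference is that you bound $I(A^k;C^k|B^k)$ directly via a slice-wise conditional Holevo bound, whereas the paper first bounds the unconditional $I(A^k;C^k)$ using the reduced single-sender channel (its Lemma~\ref{single_sender}) and then uses independence to pass to the conditional quantity; both are valid.
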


\begin{proof}	
	The proof follows the structure of \cite[Theorem 1]{verdu}. Let $(R_1, R_2) \in C(\mathbf{W})$, and let $\epsilon\in(0,1)$, $\delta > 0$. By definition, there exists a $k_0$ such that for all $k\geq k_0$, there is a $(k,M,N)$-code $\co \coloneqq (x_m, y_n, D_{mn})_{m=1,n=1}^{M,N}$ such that, 
	\begin{align*}
		\frac{1}{k} \log M \geq R_1 - \delta, \hspace{2mm} \frac{1}{k} \log N \geq R_2 - \delta, \hspace{2mm} \overline{e}(\co, W^k) \leq \epsilon. 
	\end{align*} 
	Let $\co$ be such a code with $k$ fixed. Let $A^k$ and $B^k$ be independent random variables uniformly distributed on the codewords represented by $[M]$ and $[N]$ respectively. Let $C^k$ be the output of $W^k$ when $A^k$ and $B^k$ are sent, and let $(\hat{A},\hat{B})$ be random variables for the decoding of $C^k$. Then, the Markov chain $(A^k, B^k) \rightarrow C^k \rightarrow (\hat{A}, \hat{B})$ is formed. By Fano's inequality, it holds,
	\begin{align}
		H(A^k ,B^k | C^k)_{\gamma^k_2} & \leq 1 + \epsilon\log(MN)             \\
		H(A^k | C^k)_{\gamma_2^k}      & \leq 1 + \epsilon\log(M) \label{in1}  \\
		H(B^k | C^k)_{\gamma_2^k}      & \leq 1 + \epsilon\log(N) \label{in2}. 
	\end{align}
	where $\gamma_2^k$ is the channel state constructed  with $A^k$ and $B^k$. We use Lemma \ref{single_sender} for existence of codes for the single sender inequalities (\ref{in1}) and (\ref{in2}) with no loss of rate or accuracy. Now, since $A^k$ and $B^k$ are uniformly distributed,
	\begin{align*}
		I(A^k, B^k;C^k)_{\gamma_2^k} & \geq (1-\epsilon)\log (MN) - 1 \\
		I(A^k;C^k)_{\gamma_2^k}      & \geq (1-\epsilon)\log (M) - 1  \\
		I(B^k;C^k)_{\gamma_2^k}      & \geq (1-\epsilon)\log (N) - 1. 
	\end{align*}
	Since $A^k$ and $B^k$ are independent, it holds,
	\begin{align*}
		I(A^k; C^k | B^k)_{\gamma_2^{k}} & = I(A^k; C^k | B^k)_{\gamma_2^{k}} + \underbrace{I(A^k; B^k)_{\gamma_2^{k}}}_{=0} \\
		                                 & = I(A^k;C^k, B^k)_{\gamma_2^{k}}                                                  \\
		                                 & \geq I(A^k;C^k)_{\gamma_2^k}                                                      
	\end{align*}
	The second equality is by the chain rule property of quantum mutual information (see Appendix B). The inequality holds since for any quantum state $\rho$, 
	\begin{align*}
		I(A^k;C^k, B^k)_{\rho} & = H(A^k)_{\rho} - H(A^k|C^k,B^k)_\rho \\
		                       & \geq H(A^k)_{\rho} - H(A^k|C^k)_\rho  \\
		                       & =	I(A^k;C^k)_\rho,                    
	\end{align*}
	where we use that conditioning does not increase quantum entropy. Similarly it can be shown that, 
	\begin{align*}
		I(B^k; C^k | A^k)_{\gamma_2^{k}} & \geq I(B^k;C^k)_{\gamma_2^k} 
	\end{align*}
	Combining these results, it holds,
	\begin{align*}
		(1-\epsilon)(R_1 -\delta) - \frac{1}{k}         & \leq \frac{1}{k} I(A^k; C^k | B^k)_{\gamma_2^{k}} \\
		(1-\epsilon)(R_2 -\delta) - \frac{1}{k}         & \leq \frac{1}{k} I(B^k; C^k | A^k)_{\gamma_2^{k}} \\
		(1-\epsilon)(R_1 + R_2 - 2\delta) - \frac{1}{k} & \leq \frac{1}{k} I(A^k, B^k; C^k)_{\gamma_2^{k}}  
	\end{align*}
	and therefore, 	
	\begin{align*}
		(1-\epsilon)(R_1 - \delta, R_2 - \delta) - \left( \frac{1}{k}, \frac{1}{k} \right) \in R_k. 
	\end{align*}
	for $k$ large enough which further implies,
	\begin{align}
		(1-\epsilon)(R_1 - 2\delta, R_2 - 2\delta)  \in \liminf_{k\rightarrow \infty} R_k.\label{lim_inf_ini} 
	\end{align}
	Since $\epsilon$ and $\delta$ are chosen arbitrarily, (\ref{lim_inf_ini}) gives us that $(R_1, R_2)$ is a limit of a sequence of points in $\liminf_{k\rightarrow \infty} R_k$ and is therefore in the closure of  $\liminf_{k\rightarrow \infty} R_k$ with channel state constructed with the distributions on $A^k$ and $B^k$.
\end{proof}

\begin{lemma}[Subadditivity of mutual information]\cite[Lemma 1]{winter_q_mac}\label{subadd}
	For a quantum channel $W: \st(\hi_A) \rightarrow \st(\hi_C)$, for all $k\in \mathbb{N}, k > 0$, and any channel state $\gamma^k \in \st(\hi_A^{\otimes k} \otimes \hi_C^{\otimes k})$, it holds
	\begin{align}
		I(A^k; C^k)_{\gamma^k} \leq k\cdot I(A;C)_{\gamma}. 
	\end{align}
	where $\gamma$ is $\gamma^k$ restricted to one instance of $A$ and $C$. 
\end{lemma}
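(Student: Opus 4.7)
The plan is to reduce the statement to the standard subadditivity of the Holevo information for product (memoryless) channels, followed by a concavity step in the input distribution. Since $\gamma^k$ is a channel state in the sense of Definition \ref{gamma}, I would first write it as
\begin{align*}
\gamma^k = \sum_{x^k \in \alx^k} p(x^k)\, \ketbra{x^k} \otimes W^k(x^k),
\end{align*}
with $W^k(x^k) = \bigotimes_{i=1}^k W(x_i)$ by memorylessness. For such a classical-quantum state $I(A^k;C^k)_{\gamma^k}$ is exactly the Holevo quantity $S(\rho^k) - \sum_{x^k} p(x^k) S(W^k(x^k))$, where $\rho^k = \sum_{x^k} p(x^k) W^k(x^k)$ is the output state.

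Next, I would apply two standard facts about the von Neumann entropy: additivity across tensor products, $S(W^k(x^k)) = \sum_{i=1}^k S(W(x_i))$, and subadditivity, $S(\rho^k) \leq \sum_{i=1}^k S(\rho_i)$, where $\rho_i = \sum_{x} p_i(x) W(x)$ is the marginal on the $i$th output and $p_i$ denotes the $i$th marginal of $p$. Combining these two facts yields the subadditivity of Holevo information,
\begin{align*}
I(A^k; C^k)_{\gamma^k} \leq \sum_{i=1}^k I(A_i; C_i)_{\gamma^{(i)}},
\end{align*}
with the single-letter channel states $\gamma^{(i)} \coloneqq \sum_{x} p_i(x)\, \ketbra{x} \otimes W(x)$ induced by the marginals $p_i$.

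Finally, I would invoke concavity of the Holevo information in the input distribution. Defining the averaged marginal $\bar{p} \coloneqq \tfrac{1}{k}\sum_{i=1}^k p_i$ and the corresponding single-letter channel state $\gamma \coloneqq \sum_{x} \bar{p}(x)\, \ketbra{x} \otimes W(x)$, concavity gives $\tfrac{1}{k}\sum_{i=1}^k I(A_i; C_i)_{\gamma^{(i)}} \leq I(A;C)_\gamma$, and hence $I(A^k; C^k)_{\gamma^k} \leq k \cdot I(A;C)_\gamma$, which is the desired bound.

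The main obstacle here is interpretive rather than technical: one has to settle on the right reading of ``$\gamma^k$ restricted to one instance of $A$ and $C$''. The inequality does not hold in general for the marginal on an arbitrary fixed coordinate, so the correct choice is the single-letter state built from the averaged marginal $\bar{p}$, which is precisely what concavity allows. Once this is fixed, the subadditivity step relies crucially on the memoryless product structure $W^k = W^{\otimes k}$ (to split $S(W^k(x^k))$ additively), while the concavity step is a standard property of the Holevo quantity.
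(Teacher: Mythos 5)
Your argument is correct. Note first that the paper does not prove this lemma at all: it is imported verbatim as Lemma 1 of the cited Winter paper, so there is no in-paper proof to match; your write-up supplies the standard argument that the citation points to. The three ingredients you use are all sound: for a channel state of the form $\sum_{x^k}p(x^k)\ketbra{x^k}\otimes W^k(x^k)$ the mutual information $I(A^k;C^k)$ does reduce to the Holevo quantity, entropy is additive on the product states $W^k(x^k)=\bigotimes_i W(x_i)$ and subadditive on the averaged output, and $\chi(p,W)$ is concave in $p$ because $S$ is concave and the second term is affine. You are also right on the two points of interpretation: the bound genuinely requires the memoryless product structure (the lemma is false for a general family $\{W^k\}$), and ``restricted to one instance'' must be read as the single-letter state built from the averaged marginal $\bar p=\frac1k\sum_i p_i$ rather than from an arbitrary fixed coordinate. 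One small remark for context: where the paper actually invokes the lemma (Lemma \ref{rk_sub_c}) the output register is the pair $(B,C)$ with a possibly non-product distribution $p_2$ on $\aly^k$, so the induced single-sender channel need not be memoryless; that is a gap in the paper's application, not in your proof of the lemma as stated.
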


\begin{lemma}\label{rk_sub_c}
	For a DM-CCQ channel generated by $W$, for all $k_0 \in \mathbb{N}$ it holds, 
	\begin{align*}
		\text{cl}\left( \bigcup_{k\geq k_0} R_k(W) \right) \subseteq C(W). 
	\end{align*}
\end{lemma}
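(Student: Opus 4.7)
The plan follows the standard MAC single-letterization pattern: recognize that $R_k(W)$ is exactly the per-use normalization of the single-letter MAC region of the blocked CCQ channel $\tilde{W} \coloneqq W^k$, invoke a quantum CQ-MAC achievability theorem (e.g.\ the one of Winter \cite{winter_q_mac}, whose subadditivity lemma is already cited as Lemma \ref{subadd}) for $\tilde{W}$, and rescale back to $W$.

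First, fix any $k \geq k_0$ and any rate pair $(R_1, R_2) \in R_k(W)$, realized by witness distributions $p_1 \in \mathcal{P}(\alx^k)$ and $p_2 \in \mathcal{P}(\aly^k)$ with the associated channel state $\gamma^k_2(p_1,p_2)$. Viewing $\tilde{W} \coloneqq W^k$ as a single-use CCQ channel $\alx^k \times \aly^k \to \st(\hi^{\otimes k})$, the pair $(kR_1, kR_2)$ lies in the single-letter MAC achievable region of $\tilde{W}$ with input distributions $(p_1,p_2)$, since the three defining inequalities of $R_k(W)$ are precisely the single-letter MAC inequalities for $\tilde{W}$ up to the factor $1/k$.

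Next, apply the CQ-MAC achievability theorem to the DM-CCQ channel $\tilde{W}$: for every $\epsilon,\delta>0$ there is an $n_0$ such that for all $n \geq n_0$ there exists an $(n,M,N)$-code $\tilde{\co}$ over $\tilde{W}^{\otimes n} = W^{nk}$ with $\overline{e}(\tilde{\co}, W^{nk}) \leq \epsilon$ and $\tfrac{1}{n}\log M \geq kR_1 - k\delta$, $\tfrac{1}{n}\log N \geq kR_2 - k\delta$. Re-interpreted over $W$ this is an $(nk,M,N)$-code with $\tfrac{1}{nk}\log M \geq R_1-\delta$ and $\tfrac{1}{nk}\log N \geq R_2-\delta$, so blocklengths that are multiples of $k$ already certify $(R_1,R_2) \in C(W)$. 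For an arbitrary blocklength $m = nk+r$ with $0 \leq r < k$, pad the trailing $r$ uses with fixed input symbols and decode trivially on those coordinates; the induced rate loss is $O(1/n)$ and the error is unchanged, so achievability extends to all sufficiently large $m$. Taking the union over $k \geq k_0$ yields $\bigcup_{k \geq k_0} R_k(W) \subseteq C(W)$.

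Finally, $C(W)$ is closed by the same argument used earlier for $C_{\max}(W)$: for a convergent sequence of achievable pairs $(R_{1i},R_{2i}) \to (R_1,R_2)$, choose $i$ large enough that $R_{\ell i} \geq R_\ell - \delta/2$ for $\ell \in \{1,2\}$, and use a $\delta/2$-slack code certifying $(R_{1i},R_{2i})$ as a $\delta$-slack code certifying $(R_1,R_2)$. Hence $\text{cl}\bigl(\bigcup_{k \geq k_0} R_k(W)\bigr) \subseteq C(W)$. The only non-routine ingredient is the single-letter CQ-MAC achievability used in the second step; the blocking-to-per-use rescaling and the closedness argument are standard.
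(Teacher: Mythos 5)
Your proof is correct, but it takes a genuinely different route from the paper's. The paper first \emph{single-letterizes}: using independence of the senders to write $I(A^k;C^k|B^k)_{\gamma_2^k}=I(A^k;C^k,B^k)_{\gamma_2^k}$ and then invoking the subadditivity lemma (Lemma \ref{subadd}, Winter's Lemma 1) to bound this by $k\,I(A;C,B)_{\gamma_2}=k\,I(A;C|B)_{\gamma_2}$, it shows that every point of $R_k(W)$ already satisfies the three \emph{single-letter} MAC inequalities for $W$ itself, and only then appeals to achievability of that single-letter region. You instead bypass subadditivity entirely by treating $W^k$ as a super-channel, applying the CQ-MAC coding theorem at the blocked level, and rescaling; this forces you to do the extra bookkeeping the paper's route avoids (padding to handle blocklengths not divisible by $k$, and an explicit closedness argument for $C(W)$), but it is less demanding on the information-theoretic side. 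Both arguments ultimately rest on the same non-routine ingredient — the direct coding theorem for the classical-classical-quantum MAC from \cite{winter_q_mac} — which the paper leaves implicit in its final sentence ``with the channel state $\gamma_2$, $(R_1,R_2)\in C(W)$'' and which you at least name explicitly. The one thing your route does not deliver is the containment $R_k(W)\subseteq R_1(W)$, i.e.\ the actual single-letter characterization, which is the conceptual point of this section of the paper even though it is not needed for the literal statement of the lemma.
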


\begin{proof}
	Let $k_0\in\mathbb{N}$ and $k \geq k_0$. Further, let $(R_1, R_2) \in R_k$. With this, it holds,  
	\begin{align*}
		R_1 & \leq \frac{1}{k}I(A^k;C^k|B^k)_{\gamma_2^k}  \\
		    & = \frac{1}{k} I(A^k;C^k,B^k)_{\gamma_2^k}    \\
		    & \leq \frac{1}{k}  \cdot kI(A;C,B)_{\gamma_2} \\
		    & = I(A;C|B)_{\gamma_2},                       
	\end{align*}
	with $\gamma_2^k$ the channel state constructed from $A^k$ and $B^k$. The first inequality is by definition of $R_k(W)$. For mutually independent random variables $A^k$ and $B^k$, it is easy with the chain rule for mutual information that, for $C^k$ the channel output space, $I(A^k;C^k|B^k)_\rho = I(A^k;B^k,C^k)_\rho$, for any state $\rho$ and so the first equality holds. The second inequality is due to Lemma \ref{subadd}. It can be similarly shown that $R_2 \leq I(B;C|A)_{\gamma_2}$. Further, Lemma \ref{subadd} gives $R_1+R_2 \leq I(A,B;C)_{\gamma_2}$. Thus, with the channel state $\gamma_2$, $(R_1,R_2)\in C(W)$. 
\end{proof}

Using these results, it is now straight forward to prove the following theorem.
\begin{theorem}\label{c_eq_rk}		
	For a DM-CCQ channel generated by $W$, it holds, 
	\begin{align*}
		C(W) =  \text{cl}\left( \liminf_{k\rightarrow\infty} R_k(W) \right). 
	\end{align*}
\end{theorem}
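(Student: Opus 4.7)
The plan is to observe that Theorem \ref{c_eq_rk} is essentially a bookkeeping combination of Lemmas \ref{transmis_sub_r} and \ref{rk_sub_c}, which together sandwich $C(W)$ between the two relevant set-limit quantities built from $R_k(W)$.

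First I would handle the inclusion $C(W) \subseteq \text{cl}(\liminf_{k\to\infty} R_k(W))$, which is precisely the statement of Lemma \ref{transmis_sub_r} applied to the DM-CCQ channel generated by $W$, so nothing further is required for this direction.

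For the reverse inclusion $\text{cl}(\liminf_{k\to\infty} R_k(W)) \subseteq C(W)$, I would use the standard fact that for the set-theoretic $\liminf$ defined by $\liminf_{k\to\infty} R_k(W) = \bigcup_{k_0 \in \mathbb{N}} \bigcap_{k \geq k_0} R_k(W)$, any point $p$ in this set lies in $R_k(W)$ for all $k$ beyond some $k_0(p)$. Consequently, for every fixed $k_0 \in \mathbb{N}$, one has the containment
\begin{align*}
\liminf_{k\to\infty} R_k(W) \subseteq \bigcup_{k \geq k_0} R_k(W),
\end{align*}
since any $p$ in the left-hand side belongs to $R_{\max(k_0, k_0(p))}(W)$. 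Taking closures preserves inclusion, so
\begin{align*}
\text{cl}\!\left(\liminf_{k\to\infty} R_k(W)\right) \subseteq \text{cl}\!\left( \bigcup_{k\geq k_0} R_k(W) \right) \subseteq C(W),
\end{align*}
where the last inclusion is Lemma \ref{rk_sub_c}.

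Combining the two inclusions yields the desired equality. There is no real obstacle here: the content lies entirely in the earlier lemmas, in particular the single-letterization argument of Lemma \ref{rk_sub_c} (which relied on subadditivity of mutual information, Lemma \ref{subadd}) and the Fano-based converse argument of Lemma \ref{transmis_sub_r}. The only subtlety worth stating explicitly in the write-up is the elementary set-theoretic fact above, so that the passage from $\liminf_{k\to\infty} R_k(W)$ to $\bigcup_{k\geq k_0} R_k(W)$ is transparent.
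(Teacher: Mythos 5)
Your proposal is correct and follows essentially the same route as the paper: both directions reduce to Lemmas \ref{transmis_sub_r} and \ref{rk_sub_c}, with the reverse inclusion passing through $\text{cl}\left(\bigcup_{k\geq k_0} R_k(W)\right)$ exactly as the paper does (the paper merely inserts $\text{cl}\left(\limsup_{k\to\infty} R_k(W)\right)$ as an extra intermediate step). Your explicit set-theoretic justification of $\liminf_{k\to\infty} R_k(W) \subseteq \bigcup_{k\geq k_0} R_k(W)$ is a harmless elaboration of what the paper dismisses as ``the structure of the limits for sets.''
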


\begin{proof}
	Combining the results of the previous lemmas, we have,
	\begin{align*}
		C(W) & \subseteq \text{cl}\left( \liminf_{k\rightarrow\infty} R_k(W) \right) \\ &
		\subseteq \text{cl}\left( \limsup_{k\rightarrow\infty} R_k(W) \right) \\
		& \subseteq \text{cl}\left( \bigcup_{k\geq k_0} R_k(W) \right)\\
		& \subseteq C(W),		
	\end{align*}
	where the first containment is from Lemma \ref{transmis_sub_r}, the second and third by the structure of the limits for sets with any $k_0 > 0$, and the last by Lemma \ref{rk_sub_c}.
\end{proof}

Using these results, we can prove the main theorem of the review.

\begin{proof}[Proof of Theorem \ref{main_result}]
	It holds,
	\begin{align*}
		C(W) & \subseteq \cid^{\text{sim}}(W)                                        \\
		     & \subseteq \text{cl}\left( \liminf_{k\rightarrow\infty} C_k(W) \right) \\
		     & \subseteq \text{cl}\left( \liminf_{k\rightarrow\infty} R_k(W) \right) \\
		     & = C(W),                                                               
	\end{align*}
	where the first containment is by Theorem (\ref{sim_ach}), the second from Theorem \ref{converse_theorem}, the third by Lemma \ref{c_sub_r} and the equality by Lemma \ref{c_eq_rk}.
\end{proof}

\end{document}